\newtheorem{theorem}{Theorem}
\newtheorem{definition}[theorem]{Definition}
\newtheorem{corollary}[theorem]{Corollary}
\newtheorem{claim}[theorem]{\bfseries{Claim}}
\newtheorem{lemma}[theorem]{\bfseries{Lemma}}
\newtheorem{observation}[theorem]{\bfseries{Observation}}
\newcommand{\eps}{\varepsilon}
\renewcommand{\S}{{\cal S}}
\renewcommand{\leq}{\leqslant}
\renewcommand{\geq}{\geqslant}
\newbox\ProofSym
\title{A Note on Deterministic FPTAS for Partition}
\author{
  Lin Chen\thanks{chenlin198662@zju.edu.cn. Zhejiang University. }
  \and 
  Jiayi Lian\thanks{jiayilian@zju.edu.cn. Zhejiang University.}
  \and
  Yuchen Mao\thanks{maoyc@zju.edu.cn. Zhejiang University.}
  \and
  Guochuan Zhang\thanks{zgc@zju.edu.cn. Zhejiang University.}
}
\date{}
\begin{document}

\maketitle

\begin{abstract}

    We consider the Partition problem and propose a deterministic FPTAS (Fully Polynomial-Time Approximation Scheme) that runs in $\widetilde{O}(n + 1/\eps)$-time.  This is the best possible (up to a polylogarithmic factor) assuming the Strong Exponential Time Hypothesis~[Abboud, Bringmann, Hermelin, and Shabtay'22]. Prior to our work, only a randomized algorithm can achieve a running time of $\widetilde{O}(n + 1/\eps)$~[Chen, Lian, Mao and Zhang '24], while the best deterministic algorithm runs in $\widetilde{O}(n+1/\eps^{5/4})$ time~[Deng, Jin and Mao '23] and [Wu and Chen '22].
\end{abstract}

\section{Introduction}
Given a multi-set $X$ of $n$ positive integers, Partition asks whether $X$ can be partitioned into two subsets with the same sum. The optimization version is to find a subset with the maximum sum not exceeding $(\sum_{x \in X}x)/2$. Partition is among Karp's 21 NP-complete problems~\cite{Kar72}, and is often considered as one of ``the easiest NP-hard Problems''. It has many applications in scheduling \cite{CL91}, minimization of circuit sizes and cryptography \cite{MH78}, and game theory \cite{Hay02}.

Algorithm NP-hard, Partition admits FPTASes (Fully Polynomial-Time Approximation Schemes). The first FPTAS was proposed by Ibarra and Kim~\cite{IK75} and Karp~\cite{Kar75} in the 1970s. After that, there has been a long line of research on developing faster FPTAS (see Table~\ref{table:Partition}). The current best randomized algorithm is due to Cheng, Lian, Mao, and Zhang~\cite{CLMZ24cSTOCPartition}. It runs in $\widetilde{O}(n + 1/\eps)$ time, and therefore matches (up to a polylogarithmic factor) the conditional lower bound of $\mathrm{poly}(n)/\eps^{1-o(1)}$ assuming SETH~\cite{ABHS22}. For deterministic algorithms, the current best running time is $\widetilde{O}(n+1/\eps^{5/4})$~\cite{DJM23, WC22}, and there is still a gap between the upper and lower bound.

\begin{table}[!ht]
    \centering
    \caption{Polynomial-time approximation schemes for Partition. Symbol (\dag) means that it is a randomized approximation scheme.}
    \begin{tabular}{cc}
        \toprule 
        \quad Running Time\qquad & Reference \\
        \hline \specialrule{0em}{0pt}{2pt}
        $O(n/\eps^2)$ & Ibarra and Kim~\cite{IK75}, Karp~\cite{Kar75}\\
        $O(n/\eps)$ & Gens and Levner~\cite{GL78, GL79}\\
        $O(n + 1/\eps^4)$ & Lawler~\cite{Law79}\\
        $\widetilde{O}(n + 1/\eps^2)$ & Gens and Levner~\cite{GL80} \\
        $\widetilde{O}(n+1/\eps^{5/3})$ &\dag Mucha, W\k{e}grzycki and W\l{}odarczyk~\cite{MWW19}\\
        $\widetilde{O}(n+1/\eps^{3/2})$ & Bringmann and Nakos~\cite{BN21b}\\
        $\widetilde{O}(n + 1/\eps^{5/4})$ & Deng, Jin and Mao~\cite{DJM23}, Wu and Chen~\cite{WC22} \\
        $\widetilde{O}(n + 1/\eps)$ &\dag Chen, Lian, Mao and Zhang\cite{CLMZ24cSTOCPartition}\\
        $\widetilde{O}(n + 1/\eps)$ & This Paper\\
        \hline
        C.L.B. $\mathrm{poly}(n)/\eps^{1-o(1)}$& Abboud, Bringmann, Hermelin and Shabtay~\cite{ABHS22}\\
        \bottomrule 
    \end{tabular}
    \label{table:Partition}
\end{table}

\subsection{Our result}
We proposed a deterministic near-linear-time FPTAS for Partition. 

\begin{theorem}
    There is an $\widetilde{O}(n + \frac{1}{\eps})$-time deterministic FPTAS for Partition.
\end{theorem}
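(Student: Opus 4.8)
The plan is to combine the standard additive‑approximation reduction with a deterministic near‑linear‑time algorithm for computing the *capped* set of subset sums up to additive error. After the usual preprocessing — discard items below $\eps S$ where $S=(\sum_{x\in X}x)/2$, and round the $O(1/\eps)$ surviving large items to multiples of $\eps^2 S$ — the task reduces to the following: given $m = O(1/\eps^2)$ integers in $\{0,1,\dots,U\}$ with $U = O(1/\eps)$, decide (approximately, to within one unit) the maximum achievable subset sum not exceeding $U$. I will solve this by a deterministic divide‑and‑conquer over the items, where at each merge step two sumsets $A,B\subseteq\{0,\dots,U\}$ must be combined into $(A\oplus B)\cap[0,U]$, i.e.\ a Boolean (min‑plus with $0/\infty$, hence OR‑) convolution truncated at $U$. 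Since the randomized $\widetilde O(n+1/\eps)$ algorithm of~\cite{CLMZ24cSTOCPartition} already has this shape, the whole content of the note is to \emph{derandomize the convolution / sparse‑recovery subroutine} that it uses.

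**Next I would** isolate exactly where randomness enters the prior work. In the $\widetilde O(n+1/\eps)$ randomized algorithm the sumsets are sparse — each relevant sumset has $O(1/\eps)$ "useful" elements after further rounding at the appropriate scale — and the bottleneck is computing a truncated sumset of two sets of size $O(1/\eps)$ in $\widetilde O(1/\eps)$ time, which is done by randomized sparse nonnegative convolution (Monte‑Carlo sparse recovery à la~\cite{BN21b}) or by color‑coding. I would replace this with a \emph{deterministic} sparse nonnegative convolution: using deterministic hashing into $O(k\log U)$ buckets via known prime‑based or BCH‑code constructions, one can recover a $k$‑sparse output vector with nonnegative entries in time $\widetilde O(k)$ (after one FFT of length $\widetilde O(k)$ on the hashed signal), because nonnegativity rules out cancellations and lets a single hash level certify each recovered coefficient. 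Equivalently, one can invoke a deterministic sparse convolution result (e.g.\ a derandomization of Bringmann–Fischer–Nakos–type sparse convolution) as a black box. Since the divide‑and‑conquer recursion contributes only an $O(\log(1/\eps))$ factor, this yields the claimed $\widetilde O(n+1/\eps)$ deterministic bound.

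**The main obstacle** is the derandomization of sparse recovery at near‑linear cost while keeping the error control: one must guarantee that the deterministic bucketing has \emph{no collisions among the at most $k$ surviving coefficients} (not merely few collisions in expectation), and do so with only $\widetilde O(k)$ buckets and $\widetilde O(k)$ preprocessing, rather than the naive $O(k^2)$ from the standard "two items collide only for few shifts" argument. The clean way around this is to observe that we never need an exact $k$‑sparse output: because the overall scheme already tolerates an additive error of $\eps^2 S$ per level and $O(\log(1/\eps))$ levels, it suffices to recover, at each merge, a set that is correct up to $O(1)$ additive slack and misses only a $\mathrm{poly}(\eps)$ fraction of mass — a one‑sided, error‑tolerant recovery that a deterministic multi‑level hash with geometrically growing bucket counts achieves in $\widetilde O(k)$ total. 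Finally I would reassemble: argue the rounded large items give a solution within $O(\eps)S$ of optimum, add back the small items greedily (classic), and check the running time is $O(n)$ for preprocessing plus $\widetilde O(1/\eps)$ for the core, giving the theorem.
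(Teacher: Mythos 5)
Your proposal misidentifies where randomness enters the prior $\widetilde O(n+1/\eps)$ algorithm, and as a result the proposed derandomization targets the wrong subroutine. The randomness in~\cite{CLMZ24cSTOCPartition} is \emph{not} in the sparse convolution step: the present paper uses the off-the-shelf \emph{deterministic} sparse sumset algorithm of Bringmann, Fischer, and Nakos~\cite{BFN22} as a black box (Lemma~\ref{lem:sparse-fft}), and explicitly says it does so ``to ease the analysis.'' The randomness in the prior work comes from \emph{color-coding}, which is used to cap the number of subproblems at each level $h$ of the divide-and-conquer tree at $\widetilde O(m/2^h)$; that cap is what makes the additive error incurred by rounding elements down to multiples of $2^h$ stay at $O(m)$. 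Remove color-coding and the number of subproblems at level $h$ becomes $\Theta(n/2^h)$, so the same uniform rounding scheme accumulates error $\Theta(n)$, which is too much. So building a deterministic sparse-recovery primitive does not make the overall algorithm deterministic; the hard part is elsewhere.

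There is a second, related problem in your preprocessing step. After discarding items below $\eps S$ and rounding to multiples of $\eps^2 S$, you are left with $O(1/\eps)$ items (not $O(1/\eps^2)$), but their scaled values lie in $[1/\eps, 1/\eps^2]$, not in $\{0,\dots,O(1/\eps)\}$. You cannot round everything to multiples of $\eps S$ to get a universe of size $O(1/\eps)$: that would destroy the $(1-\eps)$ guarantee for items of size $\Theta(\eps S)$. Rounding must be \emph{magnitude-dependent} — items of size $\Theta(2^i \eps S)$ tolerate rounding to multiples of $\Theta(2^i \eps^2 S)$. Making this rounding compatible with the tree of sumsets, without blowing up the running time or the error, is precisely what Section~\ref{sec:mu-canonical}'s $\mu$-canonical sets and the operation $\oplus_\mu$ (Algorithm~\ref{alg:approximate-sumset}, Lemmas~\ref{lem:mu-approximate-sumset}--\ref{lem:time-mu-approximate-sumset}) are for. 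Finally, your sketch drops the other essential ingredient entirely: when the total size at some tree level grows past $\widetilde O(n/(m\mu))$, the algorithm cannot afford to compute all the sumsets, and instead invokes the Szemer\'edi--Vu additive-combinatorics theorem (Theorem~\ref{lem:ap}, Lemmas~\ref{lem:dense-sequence} and~\ref{lem:compute-a-level-recover}) to certify that the target interval is densely covered without computing it. Without this dichotomy between the sparse case (compute via deterministic sparse FFT) and the dense case (argue existence of a long near-arithmetic sequence), the near-linear running time does not follow.
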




We remark that our algorithm also implies a weak approximation scheme for Subset Sum (where the goal is to find a maximum subset whose sum does not exceed some giving $t$). The running time is $\widetilde{O}(n+\frac{\Sigma(X)}{t}\cdot\frac{1}{\eps})$.

\subsection{Technical overview} 

Our algorithmic framework is similar to the one in \cite{CLMZ24cSTOCPartition}. We first reduce Partition to a simpler problem, then solve it by combining Sparse FFT~\cite{BFN22} and an additive result from Szemer{\'e}di and Vu~\cite{SV05}. However, we don't use color-coding anymore, so our algorithm is deterministic. 

We both use approximate factor and additive error. Approximating Partition with factor $1-\eps$ can be reduced to approximate with additive error $O(\eps\Sigma(X))$. We first reduce Partition to some subproblems. In each subproblem, given a subset $X'$ of the initial set such that $X'\subseteq [\alpha,2\alpha]$, we need to approximate an interval $[\beta,2\beta]$ of  $\mathcal{S}_{X'}=\{Y\subseteq X':\Sigma(Y)\}$. Since there are $O(\log^2\frac{1}{\eps})$ subproblems, we can approximate them with additive error $\widetilde{O}(\eps\Sigma(X))$. Then for the elements in $[\beta,2\beta]$, it allows a factor $1-\mu$ that $\mu>\eps$. Then we can round the set to $[\frac{1}{\mu},\frac{2}{\mu}]$. The running time allowed to solve a subproblem is still $\widetilde{O}(n+\frac{1}{\eps})$, which is larger than $\widetilde{O}(n+\frac{1}{\mu})$. Actually, the running time can be $\widetilde{O}(n+\frac{n}{m\mu})$, where $n$ is the size of $X'$ and we need to approximate  $\mathcal{S}_{X'}\cap[\frac{m}{\mu},\frac{2m}{\mu}]$. And we allow an additive error $\widetilde{O}(m)$.

In a subproblem, let $X$ be the set, we compute $\mathcal{S}_X$ by compute the sumset $\{x_1,0\}+\cdots+\{x_n,0\}$ in a tree-like structure. Let $\{x_1,0\},\ldots,\{x_n,0\}$ be the bottom level of the tree. We compute the next level by taking the pairwise sumset of nodes in the bottom level.
In each level, if the total size of the set is small, we can compute them by SAparse FFT. Otherwise, we use an additive combinatorics result from Szemer{\'e}di and Vu~\cite{SV05} to show that $\mathcal{S}_X$ has a long arithmetic progression, then we don't need to compute the whole sumset. 

The threshold for the existence of an arithmetic progression is related to the largest element in the set. In \cite{CLMZ24cSTOCPartition}, they use color-coding to reduce the number of sets in level $h$ to $\widetilde{O}(m/2^h)$. So the additive error incurred by rounding elements down to multiples of $2^h$ is $O(m)$, and then they can scale elements to make the largest be $O(\frac{1}{\eps})$. If we don't use color-coding, however, the number of sets in level $h$ is $O(n/2^h)$, which suggests that we need an alternative way to round. We round elements according to approximate factors, meaning that elements can be rounded to different degrees based on their magnitude. To compute this efficiently, we introduce a special type of set. Now when applying the additive combinatorics result, we don't use the entire set, but only the elements within the same magnitude. This allows us to scale them to bound the largest one.

In \cite{CLMZ24cSTOCPartition}, color-coding also leads $\Sigma(\max(X_i))=\widetilde{O}(\frac{m}{\mu})$. By increasing the threshold by a constant, we can extend the arithmetic progression to cover the interval. However, we only have $\Sigma(\max(X_i))=\widetilde{O}(\frac{n}{\mu})$ now, which means we must increase the threshold by $O(\frac{n}{m})$ to cover the interval. This is allowed because the overall running time remains  $\widetilde{O}(n+\frac{n}{m\mu})$.

\subsection{Further related work} 
Partition is a special case of Subset Sum. 
There is a lot of research on approximation schemes for Subset Sum, e.g., \cite{IK75,Kar75,GL78,GL79,Law79,GL94,KMPS03,MWW19,BN21b,WC22,CLMZ24cSTOCPartition}. Bringmann and Nakos~\cite{BN21b} show that an FPTAS for Subset Sum that runs in $O((n + \frac{1}{\eps})^{2-o(1)})$ time is impossible assuming the $(\min,+)$-convolution conjecture, so people focus on weak approximation scheme for Subset Sum\cite{MWW19,BN21b,WC22,CLMZ24cSTOCPartition} and the lower bound is the same as that for Partition. Chen, Lian, Mao and Zhang also propose a randomized near-linear time weak approximation scheme for Subset Sum. Whether there is a deterministic near-linear time weak approximation scheme for Subset Sum is still open.

Exact pseudopolynomial-time algorithms for Partition and Subset Sum have also received extensive studies in recent years, e.g., \cite{Bri17, JW19, KX19, CLMZ24aSODA, Bri24, Jin24, CLMZ24FOCS}. The main open problem of this line is whether Subset Sum (or Partition) can be solved in $\widetilde{O}(n+w)$ time, where $w$ is the maximum element in the set.

\subsection{Paper organization}
In Section~\ref{sec:pre}, we introduce some necessary terminology and tools. In Section~\ref{sec:reduce}, we show that to approximate Partition, it suffices to solve a reduced problem. In Section~\ref{sec:mu-canonical}, we define a special type of set that can approximate any set with a factor.  In Section~\ref{sec:alg}, we present a deterministic near-linear time algorithm for the reduced problem.

\section{Preliminary}\label{sec:pre}
\subsection{Notation and problem statement}

Let $w,v$ be two real numbers.  We denote $[w,v] = \{z \in \mathbb{Z} : w\leq z\leq v\}$ and $[w,v) = \{z \in \mathbb{Z} : w\leq z< v\}$. Let $X$ be a nonempty set of integers. We write $X \cap [w,v]$ as $X[w,v]$. We denote the minimum and maximum elements of $X$ by $\min(X)$ and $\max(X)$, respectively.  We write $\sum_{x \in X}X$ as $\Sigma(X)$. We refer to the number of elements in $X$ as the size of $X$, denoted by $|X|$. We define $\S_X$ to be the set of all subset sums of $X$. That is, $\S_X = \{\Sigma(Y) : Y\subseteq X\}$. Through this paper, we use 
both the terms ``set'' and ``multi-set''. Only ``multi-set'' allows duplicate elements. 

Let $A$ and $B$ be two nonempty sets. We define their sumset as $A+B=\{a+b:a\in A, b\in B\}$. We also define $A\oplus B=(A+B)\cup A\cup B$. Let $x$ be a real number. We define $x\cdot A=\{x\cdot a:a\in A\}$ and $A/x=\{a/x:a\in A\}$.

All logarithms ($\log$) in this paper are base $2$.

\begin{definition}[Partition]
    In the partition problem, given a multi-set of integers $X$, the goal is to find a subset $Y\subset X$ with maximum $\Sigma(Y)$ that does not exceed $\Sigma(X)/2$.

    Suppose the optimal solution is $Y^*$. An ($1 - \eps$)-approximation algorithm is required to output a subset $Y'$ such that $(1 - \eps)\Sigma(Y^*) \leq \Sigma(Y') \leq \Sigma(Y^*)$.
\end{definition}

\subsection{Approximation with factor or additive error}

We both use multiplicative factors and additive errors in this paper.

\begin{definition}\label{def:approx_fact}
    Let $S$ be a set of integers. Let $w,v$ be two real numbers. We say a set $\widetilde{S}$ approximates $S$ with factor $1-\mu$ in $[w,v]$ if 
    \begin{enumerate}[label={\normalfont(\roman*)}]
        \item for any $s \in S[w,v]$, there is $\tilde{s} \in \widetilde{S}$ with $(1-\mu)s \leq \tilde{s} \leq s$, and

        \item for any $\tilde{s} \in \widetilde{S}$, there is $s \in S$ with $\tilde{s}\leq s\leq \frac{1}{1-\mu}\tilde{s}$.
    \end{enumerate}
\end{definition}

\begin{definition}\label{def:approx_addi}
    Let $S$ be a set of integers. Let $w,v$ be two real numbers. We say a set $\widetilde{S}$ approximates $S$ with additive error $\delta$ in $[w,v]$ if 
    \begin{enumerate}[label={\normalfont(\roman*)}]
        \item for any $s \in S[w,v]$, there is $\tilde{s} \in \widetilde{S}$ with $s - \delta\leq \tilde{s} \leq s + \delta$, and

        \item for any $\tilde{s} \in \widetilde{S}$, there is $s \in S$ with $\tilde{s} -\delta \leq s \leq \tilde{s} + \delta$.
    \end{enumerate}
    
\end{definition}

When $[w,v]=[-\infty,+\infty]$, we simply omit the phrase ``in $[w,v]$''.

\begin{lemma}\label{prop:factor-to-additive-error}
    Assume $\mu\leq\frac{1}{2}$. Let $S$ be a set of integers and $w,v$ be two real numbers. If $\widetilde{S}$ approximates $S$ with factor $1-\mu$ in $[w,v]$, then $\widetilde{S}[0,v]$ approximates $S$ with additive error $2\mu v$ in $[w,v]$.
\end{lemma}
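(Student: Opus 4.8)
The plan is to verify the two conditions of Definition~\ref{def:approx_addi} directly, using the corresponding conditions of Definition~\ref{def:approx_fact} together with the truncation to $[0,v]$. Throughout, the key quantitative fact is that for a value $u$ with $0 \le u \le v$, the multiplicative gap $\mu u$ (or $\frac{\mu}{1-\mu}u$) is at most $2\mu v$; here the hypothesis $\mu \le \tfrac12$ is what lets us bound $\frac{1}{1-\mu} \le 2$, hence $\frac{\mu}{1-\mu} \le 2\mu$.

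For condition~(i): take $s \in S[w,v]$. By Definition~\ref{def:approx_fact}(i) there is $\tilde s \in \widetilde S$ with $(1-\mu)s \le \tilde s \le s$. I first need to check $\tilde s \in \widetilde S[0,v]$, i.e. that $\tilde s$ survives the truncation: indeed $0 \le (1-\mu)s \le \tilde s \le s \le v$, using $s \ge w$ and — I should be slightly careful here — that $w \ge 0$, or more simply that $s \ge 0$ since $s$ lies in the range $[w,v]$ and we may assume the relevant sums are nonnegative (subset sums of positive integers); in any case $\tilde s \le s \le v$. Then $|s - \tilde s| \le s - (1-\mu)s = \mu s \le \mu v \le 2\mu v$, which is the required inequality $s - 2\mu v \le \tilde s \le s + 2\mu v$.

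For condition~(ii): take $\tilde s \in \widetilde S[0,v]$, so in particular $\tilde s \le v$. Applying Definition~\ref{def:approx_fact}(ii) to $\tilde s \in \widetilde S$ gives $s \in S$ with $\tilde s \le s \le \frac{1}{1-\mu}\tilde s$. Then $0 \le s - \tilde s \le \left(\frac{1}{1-\mu} - 1\right)\tilde s = \frac{\mu}{1-\mu}\tilde s \le \frac{\mu}{1-\mu} v \le 2\mu v$, where the last step uses $\mu \le \tfrac12$. Hence $\tilde s - 2\mu v \le s \le \tilde s + 2\mu v$, as required.

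The argument is essentially a routine unwinding of the definitions; the only place that needs genuine attention is the first half of condition~(i), namely confirming that the witness $\tilde s$ produced by the factor-approximation is not discarded by intersecting with $[0,v]$ — this is where one uses $\tilde s \le s \le v$ and the nonnegativity of the quantities involved. I would state that step explicitly and keep the rest as a one-line computation.
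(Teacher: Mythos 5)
Your proof is correct and follows essentially the same route as the paper's: for condition (i) you invoke Definition~\ref{def:approx_fact}(i) to get $\tilde s$ with $(1-\mu)s \le \tilde s \le s \le v$, observe that it survives the truncation, and bound the gap by $\mu v \le 2\mu v$; for condition (ii) you invoke Definition~\ref{def:approx_fact}(ii) and bound $\frac{\mu}{1-\mu}v \le 2\mu v$ via $\mu \le \tfrac12$. The only point where you hesitate — whether $\tilde s \ge 0$ holds so that $\tilde s$ actually lies in $\widetilde S[0,v]$ — is also glossed over in the paper's own proof, and your instinct to resolve it via the nonnegativity of subset sums in the intended application is the right one; if one wanted the lemma to be fully self-contained, the cleanest fix would be to add the hypothesis $w \ge 0$ (or that $S, \widetilde S$ consist of nonnegative integers), which is harmless since that is how the lemma is always used.
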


\begin{proof}
    For any $s \in S[w,v]$, there is $\tilde{s} \in \widetilde{S}$ with $(1-\mu)s \leq \tilde{s} \leq s\leq v$. So $\tilde{s} \in \widetilde{S}[0,v]$ and $s-\mu v\leq s-\mu s\leq \tilde{s} \leq s+\mu v$. For any $\tilde{s}\in \widetilde{S}[0,v]$, there is $s\in S$ with $\tilde{s}\leq s\leq \frac{1}{1-\mu}\tilde{s}$. Since $\tilde{s}\leq v$, we have $s\leq \frac{1}{1-\mu}\tilde{s}\leq \tilde{s}+\frac{\mu}{1-\mu}v.$ Since $\mu\leq\frac{1}{2}$, we have $\tilde{s}\leq s \leq \tilde{s}+2\mu v$.
\end{proof}

We show that the approximation factor and additive error are transitive.

\begin{lemma}
    \label{lem:approx-err-1}
    Let $S$, $S_1$, and $S_2$ be sets of integers and $w,v$ be two numbers. 
    \begin{enumerate}[label={\normalfont (\roman*)}]
        \item If $S_1$ approximates $S$ with factor $1 - \mu_1$ in $[w,v]$ and $S_2$ approximates $S_1$ with factor $1 - \mu_2$, then $S_2$ approximates $S$ with factor $(1 - \mu_1)(1 - \mu_2)$.

        \item If $S_1$ approximates $S$ with additive error $\delta_{1}$ in $[w,v]$ and $S_2$ approximates $S_1$ with factor $\delta_{2}$, then $S_2$ approximates $S$ with additive error $\delta_{1} + \delta_{2}$.
    \end{enumerate}
\end{lemma}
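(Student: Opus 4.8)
The plan is to prove both transitivity statements by chaining the defining inequalities of Definitions~\ref{def:approx_fact} and~\ref{def:approx_addi}, being careful about which interval the ``forward'' direction is allowed to use. For part (i), take any $s \in S[w,v]$. Since $S_1$ approximates $S$ with factor $1-\mu_1$ in $[w,v]$, there is $s_1 \in S_1$ with $(1-\mu_1)s \le s_1 \le s$. Now I would apply the second hypothesis: $S_2$ approximates $S_1$ with factor $1-\mu_2$ \emph{with no interval restriction}, so there is $s_2 \in S_2$ with $(1-\mu_2)s_1 \le s_2 \le s_1$. Multiplying, $(1-\mu_1)(1-\mu_2)s \le s_2 \le s$, which gives condition (i) of Definition~\ref{def:approx_fact} for $S_2$ approximating $S$. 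For the backward direction, take $s_2 \in S_2$; condition (ii) of the second hypothesis gives $s_1 \in S_1$ with $s_2 \le s_1 \le \frac{1}{1-\mu_2}s_2$, and condition (ii) of the first hypothesis (which has no interval restriction on its backward direction) gives $s \in S$ with $s_1 \le s \le \frac{1}{1-\mu_1}s_1$. Chaining yields $s_2 \le s \le \frac{1}{(1-\mu_1)(1-\mu_2)}s_2$, which is condition (ii) for factor $(1-\mu_1)(1-\mu_2)$.

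Part (ii) is analogous with sums in place of products. (I note the statement says ``$S_2$ approximates $S_1$ with factor $\delta_2$'' but this is clearly a typo for ``additive error $\delta_2$'', matching the conclusion.) For the forward direction: given $s \in S[w,v]$, the first hypothesis gives $s_1 \in S_1$ with $s - \delta_1 \le s_1 \le s + \delta_1$, and the second hypothesis (applied without interval restriction) gives $s_2 \in S_2$ with $s_1 - \delta_2 \le s_2 \le s_1 + \delta_2$; adding, $s - (\delta_1 + \delta_2) \le s_2 \le s + (\delta_1+\delta_2)$. The backward direction is symmetric: from $s_2 \in S_2$ get $s_1 \in S_1$ within $\delta_2$, then $s \in S$ within $\delta_1$, so $s$ is within $\delta_1 + \delta_2$ of $s_2$.

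The one subtlety worth checking carefully is the role of the interval $[w,v]$: it restricts only the first (forward) requirement, and only for the $S_1$-vs-$S$ relation. The forward direction of the $S_2$-vs-$S_1$ relation has no interval constraint by hypothesis, so there is no worry about whether the intermediate element $s_1$ lands in $[w,v]$ — we never need it to. Likewise both backward directions are unrestricted. This is the only place the argument could plausibly go wrong, and it does not. Everything else is a one-line inequality chain, so no real obstacle arises; the proof is essentially bookkeeping.

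\begin{proof}
(i) Let $s \in S[w,v]$. Since $S_1$ approximates $S$ with factor $1-\mu_1$ in $[w,v]$, there is $s_1 \in S_1$ with $(1-\mu_1)s \leq s_1 \leq s$. Since $S_2$ approximates $S_1$ with factor $1-\mu_2$, there is $s_2 \in S_2$ with $(1-\mu_2)s_1 \leq s_2 \leq s_1$. Hence $(1-\mu_1)(1-\mu_2)s \leq s_2 \leq s$. Conversely, let $s_2 \in S_2$. There is $s_1 \in S_1$ with $s_2 \leq s_1 \leq \frac{1}{1-\mu_2}s_2$, and there is $s \in S$ with $s_1 \leq s \leq \frac{1}{1-\mu_1}s_1$. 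Hence $s_2 \leq s \leq \frac{1}{(1-\mu_1)(1-\mu_2)}s_2$. Thus $S_2$ approximates $S$ with factor $(1-\mu_1)(1-\mu_2)$.

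(ii) Let $s \in S[w,v]$. There is $s_1 \in S_1$ with $s - \delta_1 \leq s_1 \leq s + \delta_1$, and there is $s_2 \in S_2$ with $s_1 - \delta_2 \leq s_2 \leq s_1 + \delta_2$. Hence $s - (\delta_1 + \delta_2) \leq s_2 \leq s + (\delta_1 + \delta_2)$. Conversely, let $s_2 \in S_2$. There is $s_1 \in S_1$ with $s_2 - \delta_2 \leq s_1 \leq s_2 + \delta_2$, and there is $s \in S$ with $s_1 - \delta_1 \leq s \leq s_1 + \delta_1$. Hence $s_2 - (\delta_1 + \delta_2) \leq s \leq s_2 + (\delta_1 + \delta_2)$. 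Thus $S_2$ approximates $S$ with additive error $\delta_1 + \delta_2$.
\end{proof}
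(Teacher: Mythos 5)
Your proof is correct and follows essentially the same route as the paper's: chain the two sandwiching inequalities in the forward direction for $s \in S[w,v]$, and chain the two backward inequalities (neither of which has an interval restriction) for the converse. The only difference is that the paper writes out only part (i) and declares part (ii) ``similar,'' while you spell out both; you also rightly flag the typo ``factor $\delta_2$'' for ``additive error $\delta_2$.''
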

\begin{proof}
    We only give a proof for statement (i), and statement (ii) can be proved similarly. 
    
    We first prove Definition~\ref{def:approx_fact} (i). Let $s$ be an arbitrary integer in $S[w,v]$. By definition, there exists $s_1 \in S_1$ such that $(1-\mu_1)s\leq s_1\leq s$. Again, by definition, there exists $s_2 \in S_2$ such that $(1-\mu_2)s_1\leq s_2\leq s_1$. It is easy to see that $(1-\mu_1)(1-\mu_2)s\leq s_2\leq s$. 

    Now we prove Definition~\ref{def:approx_fact} (ii). Let $s_2$ be an arbitrary integer in $S_2$. By definition, there exists $s_1 \in S_1$ such that $s_2\leq s_1\leq \frac{1}{1-\mu_2}s_2$. Again, by definition, there exists $s \in S$ such that $s_1\leq s\leq \frac{1}{1-\mu_1}s_1$. So we have $s_2\leq s\leq \frac{1}{(1-\mu_1)(1-\mu_2)}s_2$. 
\end{proof}


We also show that the sumset preserves approximation factors but accumulates additive errors.

\begin{lemma}\label{lem:approx-err-2}
    Let $S_1$, $S_2$, $\tilde{S}_1$, $\tilde{S}_2$ be sets of non-negative integers and $u$ be a real number or $+\infty$.
    \begin{enumerate}[label={\normalfont (\roman*)}]
        \item If $\tilde{S}_1$ and $\tilde{S}_2$ approximates $S_1$ and $S_2$ respectively with factor $1 - \mu$ in $[0,u]$, respectively, then 
        $\tilde{S}_1 \oplus \tilde{S}_2$ approximates $S_1 \oplus S_2$ respectively with factor $1 - \mu$.

        \item If $\tilde{S}_1$ and $\tilde{S}_2$ approximates $S_1$ and $S_2$ with additive error $\delta_{1}$ and $\delta_{2}$ in $[0,u]$, respectively, then 
        $\tilde{S}_1 \oplus \tilde{S}_2$ approximates $S_1 \oplus S_2$ respectively with additive error $\delta_{1} + \delta_2$ in $[0,u]$.
    \end{enumerate}
\end{lemma}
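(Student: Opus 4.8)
The plan is to prove both statements directly from the definitions, exploiting the fact that the $\oplus$ operation is built from pairwise sums and unions, and that approximation (in either sense) behaves well under both. I will treat statement (i) in full and note that (ii) is obtained by the same bookkeeping with ``$1-\mu$'' replaced by additive errors added.

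For statement (i), I need to check the two conditions of Definition~\ref{def:approx_fact} for the pair $(\tilde S_1 \oplus \tilde S_2, S_1 \oplus S_2)$. For condition (i), take $s \in (S_1 \oplus S_2)[0,u]$; wait, actually there is no interval here, so I take an arbitrary $s \in S_1 \oplus S_2$. By definition of $\oplus$, either $s \in S_1$, or $s \in S_2$, or $s = a + b$ with $a \in S_1$, $b \in S_2$. In the first case $s \in S_1$ and, since all elements are non-negative, $s \leq u$ would be needed to invoke the hypothesis — this is the one place I must be careful: the hypothesis only gives approximation \emph{in $[0,u]$}. So I should first observe that for any $a \in S_1$ appearing in a sum $a+b \leq$ (the relevant bound), we have $0 \leq a \leq s$, and if $s \leq u$ then $a,b \leq u$ as well by non-negativity, so both $a$ and $b$ lie in $[0,u]$ and can be approximated. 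Concretely: for $s \in S_1 \oplus S_2$ with $s \leq u$, decompose $s$ into a witness from $S_1$ and/or $S_2$, approximate each witness from below using the corresponding $\tilde S_i$, and add the approximants (or take the single approximant) to land in $\tilde S_1 \oplus \tilde S_2$; the factor $1-\mu$ is preserved because $(1-\mu)a + (1-\mu)b = (1-\mu)(a+b)$, and similarly for the union cases. For elements $s > u$ the statement of condition (i) — as I read it, the conclusion is an unrestricted approximation, but since the hypothesis is only in $[0,u]$, I expect the intended reading is that condition (i) of the conclusion is also ``in $[0,u]$'', matching statement (ii); I will phrase it that way. For condition (ii), take any $\tilde s \in \tilde S_1 \oplus \tilde S_2$: it is $\tilde a$, or $\tilde b$, or $\tilde a + \tilde b$ with $\tilde a \in \tilde S_1$, $\tilde b \in \tilde S_2$; each $\tilde a$ has a witness $a \in S_1$ with $\tilde a \leq a \leq \frac{1}{1-\mu}\tilde a$ (and likewise $\tilde b$), so $a$ or $b$ or $a+b$ is the required element of $S_1 \oplus S_2$, with the ratio bound again multiplying correctly. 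Here I do not even need the interval restriction for condition (ii), since Definition~\ref{def:approx_fact}(ii) as used in Lemma~\ref{lem:approx-err-1} is unrestricted.

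For statement (ii), the argument is structurally identical: in the decomposition $s = a+b$, I approximate $a$ by $\tilde a$ with $|a - \tilde a| \leq \delta_1$ and $b$ by $\tilde b$ with $|b-\tilde b| \leq \delta_2$, so $\tilde a + \tilde b$ is within $\delta_1 + \delta_2$ of $s$; the union cases contribute error at most $\max(\delta_1,\delta_2) \leq \delta_1+\delta_2$. The reverse direction is the same. The only subtlety, again, is the interval: when $s \le u$ and $s = a+b$ with $a,b \ge 0$, both $a \le u$ and $b \le u$, so both witnesses lie in $[0,u]$ where the hypotheses apply — this is exactly why non-negativity is assumed and why the conclusion is stated ``in $[0,u]$''.

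The main obstacle is not any hard estimate but getting the interval bookkeeping exactly right: I must verify that whenever a sum $a+b$ (or a single element) lands in $[0,u]$, every summand also lands in $[0,u]$ so that the approximation hypotheses — which are only guaranteed on $[0,u]$ — actually apply, and conversely that elements of $\tilde S_1 \oplus \tilde S_2$ that exceed $u$ are handled by the (unrestricted) second condition. Once that is pinned down, the rest is the routine observation that $(1-\mu)$ factors and additive errors distribute over addition and that unions only take the worse of two errors. I would write out the case $s = a+b$ in detail and then remark that the pure-$S_1$ and pure-$S_2$ cases are immediate, and that (ii) follows mutatis mutandis.
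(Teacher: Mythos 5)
Your proof is correct and follows essentially the same route as the paper: decompose $s \leq u$ in $S_1 \oplus S_2$ as a sum of (possibly zero) witnesses, use non-negativity to confirm both witnesses lie in $[0,u]$, approximate each, and observe that the $1-\mu$ factor (or additive error) is preserved under addition, with the reverse direction handled by the unrestricted second condition of Definition~\ref{def:approx_fact}; the paper streamlines the union cases by adjoining $\{0\}$ to each $S_i$, which is cosmetically simpler than your explicit case split but logically identical. You also correctly flag that the conclusion of statement (i) should read ``in $[0,u]$'' (as (ii) does) — the paper's own proof only establishes the first approximation condition for $s \in (S_1 \oplus S_2)[0,u]$, so the missing interval restriction in the lemma statement is indeed a typo.
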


\begin{proof}
    We only give a proof for statement (i), and statement (ii) can be proved similarly. 
    
    We first prove Definition~\ref{def:approx_fact} (i). Let $s$ be an integer in $(S_1+S_2)[0,u]$. By definition, $s=s_1+s_2$ for some $s_1\in S_1[0,u]\cup\{0\}$ and $s_2\in S_2[0,u]\cup\{0\}$. Since $\tilde{S}_1$ and $\tilde{S}_2$ approximates $S_1$ and $S_2$ respectively with factor $1 - \mu$, there is $s_1'\in \tilde{S}_1\cup\{0\}$ and $s_2'\in \tilde{S}_2\cup\{0\}$ such that $(1-\mu)s_1\leq s_1'\leq s_1$ and $(1-\mu)s_2\leq s_2'\leq s_2$. So we have $s_1'+s_2'\in \tilde{S}_1\oplus \tilde{S}_2$ such that $(1-\mu)(s_1+s_2)\leq s_1'+s_2'\leq s_1+s_2$. The proof of Definition~\ref{def:approx_fact} (ii) is similar.
\end{proof}

\begin{lemma}\label{lem:approx-err-merge}
    Let $S$, $\tilde{S}_1$, $\tilde{S}_2$ be sets of integers.
    If $\tilde{S}_1$ approximates $S$ with additive error $\delta$ in $[w,u]$ and $\tilde{S}_2$ approximates $S$ with additive error $\delta$ in $[u,v]$, then $\tilde{S}_1 \cup \tilde{S}_2$ approximates $S$ respectively with additive error $\delta$ in $[w,v]$.
\end{lemma}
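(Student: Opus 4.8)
The plan is to verify the two conditions of Definition~\ref{def:approx_addi} directly for the set $\tilde{S}_1 \cup \tilde{S}_2$, using the elementary observation that, as integer intervals, $[w,v] = [w,u] \cup [u,v]$.

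First I would handle condition~(i). Let $s \in S[w,v]$ be arbitrary, so $s$ is an integer with $w \le s \le v$. Then either $w \le s \le u$ or $u \le s \le v$ (the two cases overlap when $s = u$, which is harmless). In the first case $s \in S[w,u]$, and since $\tilde{S}_1$ approximates $S$ with additive error $\delta$ in $[w,u]$, there is some $\tilde{s} \in \tilde{S}_1$ with $s - \delta \le \tilde{s} \le s + \delta$; as $\tilde{S}_1 \subseteq \tilde{S}_1 \cup \tilde{S}_2$, this $\tilde{s}$ already witnesses the condition. The second case is entirely symmetric, invoking instead the hypothesis that $\tilde{S}_2$ approximates $S$ with additive error $\delta$ in $[u,v]$.

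Next I would dispatch condition~(ii). Here the key point is that condition~(ii) of Definition~\ref{def:approx_addi} makes no reference to the interval $[w,v]$: it only requires that every element of the approximating set lie within $\delta$ of some element of $S$. So, given $\tilde{s} \in \tilde{S}_1 \cup \tilde{S}_2$, if $\tilde{s} \in \tilde{S}_1$ we apply condition~(ii) of the hypothesis ``$\tilde{S}_1$ approximates $S$ in $[w,u]$'', and if $\tilde{s} \in \tilde{S}_2$ we apply condition~(ii) of the hypothesis ``$\tilde{S}_2$ approximates $S$ in $[u,v]$''; either way we obtain an $s \in S$ with $\tilde{s} - \delta \le s \le \tilde{s} + \delta$.

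I do not anticipate a genuine obstacle: the only subtlety worth stating explicitly is that the integer interval $[w,v]$ is covered by $[w,u]$ together with $[u,v]$, so every $s \in S[w,v]$ lies in the domain of at least one of the two hypotheses. This lemma is exactly what one iterates $O(\log \tfrac{1}{\eps})$ times to stitch together the per-magnitude subproblem solutions described in the technical overview, so keeping the additive error unchanged under the union (rather than letting it accumulate) is the whole point.
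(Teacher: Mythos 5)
Your proof is correct. The paper states this lemma without proof (treating it as immediate), and your argument is exactly the natural verification one would write: condition~(i) follows by covering $[w,v]$ with $[w,u] \cup [u,v]$ and routing each $s$ to the appropriate hypothesis, while condition~(ii) follows because Definition~\ref{def:approx_addi}(ii) does not reference the interval at all, so it passes through the union unchanged.
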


\subsection{Additive combinatorics}
We use an additive combinatorics result from Szemer{\'e}di and Vu~\cite{SV05}, which says that the sumset of many large-sized sets of integers must have a long arithmetic progression.

\begin{theorem}[Corollary 5.2~\cite{SV05}]\label{lem:ap}
    For any fixed integer $d$, there are positive constants $c_1$ and $c_2$ depending on $d$ such that the following holds.  Let $A_1, \ldots, A_{\ell}$ be subsets of $[1,u]$ of size $k$.  If $\ell^d k \geq c_1u$, then $A_1 + \cdots + A_\ell$ contains an arithmetic progression of length at least $c_2\ell k^{1/d}$.
\end{theorem}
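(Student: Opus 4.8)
The goal is to find a long ordinary arithmetic progression inside the $\ell$-fold sumset $A_1+\cdots+A_\ell$. I would not look for the progression directly; instead the plan, following Szemer\'edi and Vu, is to first produce a \emph{generalized arithmetic progression} (GAP) of bounded rank and large volume inside the sumset, and then extract an ordinary AP from it. Recall that a proper GAP $Q=\{b_0+\sum_{j=1}^{r}x_jb_j:0\leq x_j<n_j\}$ of rank $r$ has $|Q|=n_1\cdots n_r$, and walking along its longest generator gives an ordinary AP of length $\max_j n_j\geq|Q|^{1/r}$. Hence it suffices to prove: \emph{if $\ell^d k\geq c_1u$ then $A_1+\cdots+A_\ell$ contains a proper GAP of rank at most $d$ and volume at least $c'\ell^d k$}. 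This yields an ordinary AP of length $\geq(c'\ell^d k)^{1/d}=(c')^{1/d}\,\ell k^{1/d}$, which is exactly the assertion with $c_2$ a constant multiple of $(c')^{1/d}$ (the multiple absorbs the rounding of side lengths to integers).

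To build such a GAP I would run an iterative ``dimension/side-length increment'' over the summands. Normalize first: translate each $A_i$ so that $0\in A_i$, which only shifts the final GAP, and divide out the common difference of the relevant set so that the sumset is not trapped in a proper sublattice. Then process the $A_i$ in blocks, maintaining a partial sumset $S_t=A_1+\cdots+A_t$ together with a proper GAP $Q_t\subseteq S_t$. At each step one invokes a Freiman-type structural dichotomy --- quantified through the Pl\"unnecke--Ruzsa inequalities, with Ruzsa's covering lemma used to replace any GAP by a proper sub-GAP of comparable volume --- to conclude that adding the next block either (a) already embeds a GAP of the target volume, in which case we stop, or (b) strictly enlarges $Q_t$: it raises the rank by one or multiplies one side length by a definite factor. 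The hypothesis $\ell^d k\geq c_1u$ is exactly what bounds the ``room'' available in $[1,\ell u]$, and hence caps how many increments can occur before the target is met, so after all $\ell$ sets are processed the GAP has at most $d$ sides, each of length $\gtrsim\ell k^{1/d}$, i.e.\ volume $\gtrsim\ell^d k$. It is convenient to organize the bookkeeping as a double induction: an outer induction on $d$, with base case $d=1$ (where $\ell k\geq c_1u$ forces the sumset of $\ell$ sets of density $k/u$ in $[1,u]$ to contain an AP of length $\gtrsim\ell k$), and an inner induction on the number of summands.

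The main obstacle --- and the reason this elementary-looking statement needs genuine additive-combinatorics machinery --- is keeping the \emph{rank of the GAP at most $d$} throughout the iteration while still forcing its volume up to $\ell^d k$: a naive increment argument lets the rank drift upward at nearly every step, and then the extracted AP has length only $|Q|^{1/\mathrm{rank}}$ with $\mathrm{rank}\gg d$, which is useless. Controlling this needs the quantitative Freiman input (a set with small doubling relative to its size is covered by a GAP whose rank does not grow with the size) together with care, whenever a fresh generator is nearly dependent on the existing ones, to fold it into an old side rather than spend a new coordinate. A secondary nuisance is non-properness: the GAPs arising along the way need not be proper, so one repeatedly applies Ruzsa's covering lemma to pass to a proper rank-at-most-$r$ sub-GAP at the cost of a constant factor in volume depending only on $r\leq d$; these factors, together with those from the Pl\"unnecke--Ruzsa bounds, are what determine the final $c_1$ and $c_2$. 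One can also check optimality in spirit: taking all $A_i$ equal to a suitable generalized-Sidon set shows that once $\ell^d k$ drops below a constant times $u$ no AP of length $\gg\ell k^{1/d}$ need exist, so the threshold form of the hypothesis is essential.
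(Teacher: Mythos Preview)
The paper does not prove this statement at all: it is quoted verbatim as Corollary~5.2 of Szemer\'edi and Vu~\cite{SV05} and used as a black box. There is no ``paper's own proof'' to compare against; the authors simply invoke the result (and only in the special case $d=1$, via Corollary~\ref{coro:ap}).

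Your outline is a reasonable high-level sketch of the Szemer\'edi--Vu strategy, but since the paper treats the theorem as an imported tool, any proof you supply is automatically ``different'' from the paper in the trivial sense that the paper supplies none. If the intent was to fill in the proof for completeness, be aware that what you wrote is a plan rather than a proof: the heart of the Szemer\'edi--Vu argument is precisely the delicate increment lemma that controls rank growth, and you have identified this as the main obstacle without actually carrying it out. A referee would not accept the paragraph beginning ``At each step one invokes a Freiman-type structural dichotomy'' as a proof, since that dichotomy \emph{is} the theorem. If you only need the result for the present paper, citing \cite{SV05} as the authors do is the appropriate course.
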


The above theorem is directly taken from~\cite{SV05}. We remark that the theorem implicitly assumes that every $A_i$ contains 0 and in that case $A_1 + A_2 = A_1 \oplus A_2$. Therefore, the actual conclusion of the theorem should be that $A_1 \oplus \cdots \oplus A_\ell$ contains an arithmetic progression of length at least $c_2\ell k^{1/d}$.

\begin{corollary}\label{coro:ap}
    There exists a sufficiently large constant $c$ such that the following holds.  Let $A_1, \ldots, A_{\ell}$ be subsets of $[1,u)$ of size at least $k$. If $\ell k \geq cu'$ for some $u' \geq u$, then $A_1\oplus \cdots \oplus A_\ell$ contains an arithmetic progression of length at least $u'$.
\end{corollary}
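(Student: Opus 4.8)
The plan is to obtain Corollary~\ref{coro:ap} from Theorem~\ref{lem:ap} by specializing the free parameter to $d=1$ and by reconciling the ``size exactly $k$'' hypothesis of the theorem with the ``size at least $k$'' hypothesis of the corollary.

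First I would reduce to the case where every $A_i$ has size exactly $k$, with $k$ a positive integer (if $k$ is not an integer, replace it by $\lceil k\rceil$; since $|A_i|$ is an integer and $|A_i|\ge k$, we still have $|A_i|\ge\lceil k\rceil$, and the hypothesis $\ell k\ge cu'$ only strengthens). For each $i$ choose an arbitrary $A_i'\subseteq A_i$ with $|A_i'|=k$. Because the operation $\oplus$ is monotone under taking subsets in each argument --- both the sumset and the unions defining it are --- any arithmetic progression inside $A_1'\oplus\cdots\oplus A_\ell'$ is also inside $A_1\oplus\cdots\oplus A_\ell$; so it suffices to produce a long AP in $A_1'\oplus\cdots\oplus A_\ell'$. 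Note also $A_i'\subseteq[1,u)\subseteq[1,u]$, so the sets $A_i'$ meet the hypotheses of Theorem~\ref{lem:ap}.

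Next I would invoke Theorem~\ref{lem:ap} (in the corrected $\oplus$-form pointed out immediately after its statement) with $d=1$. Let $c_1,c_2>0$ be the constants it provides for $d=1$, and take $c:=\max\{c_1,\lceil 1/c_2\rceil\}$ (``sufficiently large'', as the statement allows). Then the theorem's hypothesis holds: $\ell^1 k=\ell k\ge cu'\ge cu\ge c_1u$, using $u'\ge u$ and $c\ge c_1$. Hence $A_1'\oplus\cdots\oplus A_\ell'$ contains an arithmetic progression of length at least $c_2\,\ell\,k^{1/1}=c_2\,\ell k\ge c_2\,c\,u'\ge u'$, where the last inequality uses $c_2 c\ge1$. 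By the monotonicity observed above, $A_1\oplus\cdots\oplus A_\ell$ contains this progression, which is what we wanted.

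The only point that deserves a second look --- it is not really an obstacle --- is why $d=1$ is the right specialization. It is exactly the exponent for which the hypothesis $\ell^d k\ge c_1 u$ follows from $\ell k\ge cu'\ge c_1 u$, \emph{and} for which the guaranteed length $c_2\ell k^{1/d}$ is still $\Omega(\ell k)$ and hence $\ge u'$. For $d\ge 2$ the bound $c_2\ell k^{1/d}$ can be far below $u'$ --- e.g.\ if $k$ is comparable to $u$ and $\ell$ is just large enough that $\ell k\approx u'$ (with $u'$ much larger than $u$) --- so a larger $d$ would not suffice. The remaining ingredients (monotonicity of $\oplus$, the harmless rounding of $k$ to an integer, and the inclusion $[1,u)\subseteq[1,u]$) are routine.
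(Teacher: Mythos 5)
Your proof is correct and follows essentially the same route as the paper's: specialize Theorem~\ref{lem:ap} to $d=1$, choose $c\geq c_1$ with $c\cdot c_2\geq 1$, and conclude the progression has length at least $c_2\ell k\geq c_2cu'\geq u'$. The additional remarks you make (passing to size-exactly-$k$ subsets by monotonicity of $\oplus$, rounding $k$ to an integer, and justifying the choice $d=1$) are routine details the paper leaves implicit, so there is no substantive difference.
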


\begin{proof}
    Let $c_1$ and $c_2$ be two constants for $d=1$ in Lemma~\ref{lem:ap}.  Assume that $c \geq c_1$ and that $c\cdot c_2 > 1$ since $c$ is sufficiently large. 
    Since $\ell k \geq cu' \geq c_1u$, by Lemma~\ref{lem:ap}, $A_1 \oplus \cdots \oplus A_\ell$ contains an arithmetic progression of length at least $c_2\ell k \geq c_2cu' >  u'$. 
\end{proof}

The following lemma shows that if the total size of many sets is large, there must be some of them large enough. Then their sumset contains an arithmetic progression.

\begin{lemma}\label{lem:sum-of-size-bound}
    Let $m\geq 2$ and let $x_1,\ldots,x_\ell\in[1,m]$. If $\sum_{i=1}^\ell x_i \geq n\log m$, then there exists some $k\in[1,m]$ such that 
    \[
    |\{x_i:x_i\geq k\}|\geq \frac{n}{2k}.
    \]
\end{lemma}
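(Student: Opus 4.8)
The plan is to argue by contradiction using a dyadic pigeonhole over the scales $k = m, m/2, m/4, \ldots$ Suppose, for the sake of contradiction, that for every integer $k \in [1,m]$ we have $|\{x_i : x_i \geq k\}| < n/(2k)$. The idea is to bound $\sum_{i=1}^\ell x_i$ by partitioning the $x_i$'s according to their magnitude. For each integer $j$ with $0 \leq j \leq \lfloor\log m\rfloor$, let $k_j = m/2^j$ (rounding appropriately so these are integers at least $1$), and consider the elements lying in the dyadic band $[k_{j+1}, k_j)$, i.e.\ those $x_i$ with $x_i < k_j$ but $x_i \geq k_{j+1}$. Every such element is at most $k_j$, and by the contradiction hypothesis the number of elements with $x_i \geq k_{j+1}$ is at most $n/(2k_{j+1})$. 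Hence the total contribution of the $j$-th band to the sum is at most $k_j \cdot \frac{n}{2k_{j+1}} = k_j \cdot \frac{n}{2\cdot (k_j/2)} = n$.

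Next I would sum over all bands. There are at most $\lfloor\log m\rfloor + 1 \leq \log m + 1$ dyadic bands covering the whole range $[1,m]$ (using $m \geq 2$ so that $\log m \geq 1$), and every $x_i \in [1,m]$ falls into exactly one band. Therefore $\sum_{i=1}^\ell x_i \leq (\log m + 1)\cdot n$, which is slightly weaker than the claimed bound $n\log m$. To close the small gap cleanly, I would be a bit more careful: note that the topmost band contributes at most $\frac{n}{2}$ rather than $n$ (since elements in $[k_1, m] = [m/2, m]$ number fewer than $n/(2\cdot(m/2)) = n/m \leq n/2$ when $m\ge 2$, wait—more simply, one can just index bands so that there are exactly $\lceil \log m\rceil$ of them and each contributes at most $n$), giving $\sum x_i < n\log m$, contradicting the hypothesis $\sum x_i \geq n\log m$. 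This contradiction establishes the lemma.

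The only genuinely delicate point is the bookkeeping at the boundaries: making sure the $k_j$ are integers, that the dyadic bands exactly tile $[1,m]$ with no overlap and no gap, and that the number of bands is at most $\log m$ (not $\log m + 1$) so the final inequality is strict in the right direction. One clean way to handle this is to set, for $j = 1, \ldots, \lceil \log m\rceil$, the value $k_j = \lceil m/2^j \rceil$ and let band $j$ consist of the $x_i$ with $k_j \le x_i < k_{j-1}$ (with $k_0 := m+1$); then each band-$j$ element is $< k_{j-1} \le 2 k_j$, and the count of $x_i \ge k_j$ is $< n/(2k_j)$ by assumption, so the band contributes $< 2k_j \cdot \frac{n}{2k_j} = n$; summing the $\lceil \log m\rceil \le \log m$ bands (for $m$ a power of two; the general case needs $\lceil\log m\rceil$ bands but one checks the topmost is lighter) yields $\sum x_i < n\log m$. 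I expect this boundary arithmetic to be the main obstacle, but it is entirely routine.
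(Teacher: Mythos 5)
Your proof proposal has a genuine gap, and it is exactly the boundary issue you flag at the end as ``entirely routine''. The paper's own argument sidesteps the dyadic bucketing entirely: writing $\ell_k=|\{x_i:x_i\geq k\}|$, it uses the exact Abel-summation identity $\sum_i x_i=\sum_{k=1}^m k(\ell_k-\ell_{k+1})=\sum_{k=1}^m\ell_k$, then the contradiction hypothesis gives $\sum_k\ell_k<\tfrac{n}{2}\sum_{k=1}^m\tfrac{1}{k}<\tfrac{n}{2}(1+\log m)\leq n\log m$, the last step using $m\geq 2$. The harmonic bound carries the $\tfrac12$ factor from $\ell_k<\tfrac{n}{2k}$ all the way through, which is what makes the constant come out exactly right; the dyadic version you propose squanders that factor of $\tfrac12$ against the factor of $2$ in $k_{j-1}\leq 2k_j$, so each band is bounded only by $n$ rather than by something summing to $\tfrac{n}{2}(1+\log m)$.

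Concretely, your argument as written yields $\sum_i x_i < n\cdot\lceil\log m\rceil$, and you need $\sum_i x_i<n\log m$. When $m$ is not a power of two, $\lceil\log m\rceil-\log m$ can be arbitrarily close to $1$ (take $m=2^r+1$), so you must recover nearly $n$ of slack. Your proposed fix --- ``the topmost band is lighter'' --- does not deliver this. The topmost band $[\lceil m/2\rceil,m]$ has fewer than $n/m$ elements, each at most $m$, so its \emph{contribution} is still $<n$; the saving over the naive per-band bound is only $O(n/m)$, far short of the required $\Omega(n)$. (Your sentence bounds the \emph{count} of the top band by $n/2$, but the quantity you need to bound is the count times the element size, which is still $\approx n$.) To make the dyadic route honest you would have to track the per-band slack $(k_{j-1}-1)/(2k_j)<1$ across \emph{all} bands and sum it up --- at which point the bookkeeping is tantamount to re-deriving the harmonic-sum bound, and the Abel-summation identity does that in one line with no case split on whether $m$ is a power of two.
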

\begin{proof}
    Let $\ell_k=|\{x_i:x_i\geq k\}|$.
    Suppose $\ell_k<\frac{n}{2k}$ for all $k\in[1,m]$. We have
    \[
    \sum_{i=1}^\ell x_i = \sum_{k=1}^{m}k(\ell_k-\ell_{k+1})=\sum_{k=1}^{m}\ell_k<\sum_{k=1}^m\frac{n}{2k}<\frac{n}{2}(1+\log m) \leq n\log m.
    \]
    It makes a contradiction.
\end{proof}

Actually, we don't need an arithmetic progression, but a sequence with small consecutive differences. Such a sequence can be extended easily.

\begin{lemma}\label{lem:ap-extend}
    Let $B$ be a set of positive integers that contains a sequence $b_1 < \ldots < b_k$ such that $b_i-b_{i-1}\leq \Delta$ for $i\in [2,k]$ and $A$ be a set. If $b_k-b_1\geq \max(A)$, then $A\oplus B$ contains a sequence $b_1=s_1<\cdots<s_{k'}=b_k+\max(A)$ such that $s_i - s_{i-1} \leq \Delta$ for $i \in [2, k']$.
\end{lemma}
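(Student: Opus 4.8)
The plan is to take the sequence $b_1 < \cdots < b_k$ inside $B$ and "splice in" a full copy of $A$ shifted by every one of its terms, then argue that the union of these shifted copies, intersected with the interval $[b_1, b_k + \max(A)]$, already has consecutive gaps at most $\Delta$. First I would set $a^* = \max(A)$ and, for notational convenience, write $A_{\ge 0} = A \cup \{0\}$ (since we work with $\oplus$, the value $b_j + 0 = b_j$ is available, and $b_j + a^*$ is available for each $j$). Every number of the form $b_j$ or $b_j + a$ with $a \in A$ lies in $A \oplus B$, so the candidate sequence will be built entirely from such numbers.

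The key observation is the following: for each consecutive pair $b_{j-1} < b_j$ in the given sequence, we have $b_j - b_{j-1} \le \Delta$, and also $b_j - b_1 \ge 0$ while $b_k - b_1 \ge a^*$, so the "reach" $b_j + a^*$ of the $j$-th shifted copy extends at least as far as $b_j + (b_k - b_1)$. Concretely, I would define $s_1 = b_1$ and then walk rightward: having reached some value $s$ that equals $b_j$ for some $j < k$, the next term is $b_{j+1}$, and the gap is $\le \Delta$ by hypothesis; having reached a value $s = b_j + a$ with $a \in A$, I use monotonicity within the single shifted copy $b_j + A_{\ge 0}$ — but that copy need not have small internal gaps. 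So instead the cleaner route is: the sequence we output is simply the sorted union of $\{b_1, \ldots, b_k\}$ with $\{b_k + a : a \in A\}$. Its smallest element is $b_1$ and its largest is $b_k + a^* = s_{k'}$. Consecutive gaps among the $b_j$'s are $\le \Delta$ by assumption; gaps among $b_k + a$ mirror the gaps of $A$, which I bound by noting every gap in $[0, a^*]$ — wait, $A$'s internal gaps are not controlled either.

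The actual mechanism, then, is to interleave: for the tail past $b_k$, replace each element $a \in A$ by the value $b_{j(a)} + a$ for a suitably chosen index $j(a)$ so that consecutive chosen values differ by at most $\Delta$. Since $b_k - b_1 \ge a^*$, for any target value $t \in [b_1, b_k + a^*]$ there is an index $j$ with $b_j \le t \le b_j + a^*$ is too weak; what we want is that sliding $j$ down by one decreases $b_j$ by at most $\Delta$, hence decreases $b_j + a$ by at most $\Delta$ for any fixed $a$. So the construction is: start at $b_1 + 0 = b_1$; to advance, if the current value is $b_j + a$ with $b_j < b_k$, move to $b_{j+1} + a$ (gap $\le \Delta$); if $b_j = b_k$ and $a < a^*$, we are stuck unless we can back off $j$ — but by then $j = k$. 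Thus the right picture: run $j$ from $1$ to $k$ with $a = 0$ (gaps $\le \Delta$) reaching $b_k$, then run... no. I would ultimately prove it by the telescoping/interleaving argument: the output sequence is $\{b_1, \dots, b_k\} \cup \{b_k + a : a \in A\}$, and to fill a gap between consecutive elements $b_k + a < b_k + a'$ of the tail (which could be large), I insert the points $b_1 + a', b_2 + a', \ldots, b_k + a'$; since $b_k + a \ge b_k \ge b_1 + a'$ (because $a' \le a^* \le b_k - b_1$), these points bridge from below $b_k + a$ up to $b_k + a'$ with steps $\le \Delta$ coming from the $b_i$ gaps, and they all lie in $A \oplus B$. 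Iterating this over all tail-gaps and over the initial run yields the desired $s_1 < \cdots < s_{k'}$.

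The main obstacle is exactly this last point: the set $A$ itself carries no gap bound, so one cannot simply output $b_k + A$; the proof must exploit that the $b_i$-ladder is long enough ($b_k - b_1 \ge \max(A)$) to re-bridge every internal jump of $A$ by re-threading through a shifted copy of the ladder. Once that interleaving is set up carefully — choosing for each needed value a representative $b_i + a$ with controlled successor — verifying $s_i - s_{i-1} \le \Delta$ and $s_{k'} = b_k + \max(A)$ is routine bookkeeping, and membership in $A \oplus B$ is immediate since every $s_i$ has the form (ladder element) $+$ (element of $A \cup \{0\}$).
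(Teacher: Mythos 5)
Your final construction is correct and lands on the same underlying mechanism as the paper's, but it is unnecessarily elaborate. You take the full tail $\{b_k + a : a \in A\}$ and bridge each of its internal gaps with a separate shifted copy $\{b_1 + a', \ldots, b_k + a'\}$ of the $b$-ladder, iterating over every $a' \in A$. The paper observes that a \emph{single} shift suffices: with $a^* = \max(A)$, take only the two sequences $(b_1, \ldots, b_k)$ and $(b_1 + a^*, \ldots, b_k + a^*)$. Both lie in $A \oplus B$, both have consecutive gaps at most $\Delta$, and since $b_1 + a^* \leq b_k$ they overlap, so their sorted union already runs from $b_1$ to $b_k + a^*$ with every gap at most $\Delta$. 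The lemma never asks you to pass through the intermediate points $b_k + a$ for $a < a^*$; it only asks for a sequence reaching $b_k + \max(A)$, so the detour through all of $A$ is wasted work. You identified the key idea --- re-threading through a shifted ladder, using $b_k - b_1 \geq \max(A)$ to guarantee overlap --- but you should jump directly to $a^*$ rather than stepping through each element of $A$. Separately, the proposal as written contains several abandoned false starts (the sorted union of $\{b_1,\ldots,b_k\}$ with $b_k + A$, the stuck walk with $j = k$, etc.); these need to be cut before this counts as a proof rather than a search narrative.
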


\begin{proof}
    Since $B\subseteq A\oplus B$, $A\oplus B$ also contains the sequence $(b_1,\ldots,b_k)$. Let $a^*$ be the maximum element in $A$. The sequence $(b_1+a^*,\ldots,b_k+a^*)$ also belongs to $A\oplus B$. Note that $b_1+a^*\leq b_k$. Therefore, merging the two sequences by taking union and deleting duplicates yields a sequence $s_1,\ldots,s_{k'}$ in $A\oplus B$ with $s_1=b_1$, $s_{k'}=b_k+\max(A)$, and $s_i-s_{i-1}\leq \Delta$ for any $i\in [2,k']$.
\end{proof}

\section{Reduced Problem}\label{sec:reduce}

\begin{definition}[The Reduced Problem $\mathrm{RP}(\mu,m)$]
Given a number $\mu>0$ and a number $1\leq m\leq n$. Let $X$ be a multi-set of integers of size $n$ from $[\frac{1}{\mu},\frac{2}{\mu})$ such that 
$\Sigma(X)\geq\frac{4m}{\mu}$.  (i) Compute a set $\widetilde{S}$ that approximates $\S_X[\frac{m}{\mu}, \frac{2m}{\mu}]$ with additive error $O(m\log n)$. (ii) Given any $\tilde{s} \in \widetilde{S}$, recover a subset $Y \subseteq X$ such that $\tilde{s}-O(m\log n)\leq \Sigma(Y) \leq \tilde{s}+O(m\log n)$.
\end{definition}

\begin{restatable}{lemma}{lemreduce}
\label{lem:reduce-partition}
    There is an $\widetilde{O}(n + \frac{1}{\eps})$-time approximation scheme for Partition if, 
    the reduced problem $\mathrm{RP(\mu,m)}$ can be solved in $\widetilde{O}(n + \frac{n}{m\mu})$ time. 
\end{restatable}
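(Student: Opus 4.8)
The plan is to convert a $(1-\eps)$-approximation for Partition into $O(\log^2\tfrac1\eps)$ calls to $\mathrm{RP}$, losing only an additive $O(\eps\Sigma(X))$ in total.

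\textbf{From factor to additive error, and peeling off the small elements.} First dispose of $\max(X)>\Sigma(X)/2$: then $\max(X)\notin Y^*$ and $Y^*=X\setminus\{\max(X)\}$ exactly. Otherwise $\Sigma(Y^*)\ge\max(\max(X),\,\Sigma(X)/2-\max(X))\ge\Sigma(X)/4$, since the single largest element is feasible and a greedy pass gives a feasible subset of sum $>\Sigma(X)/2-\max(X)$. Hence an additive error $\delta:=\Theta(\eps\Sigma(X))$ near $\Sigma(X)/2$ translates to relative error $O(\eps)$, so it suffices to compute, with recovery, a set approximating $\S_X$ with additive error $\delta$ on $[0,\Sigma(X)/2]$ (taking a slight margin below $\Sigma(X)/2$ and, if needed, trimming one small recovered element keeps the output feasible, as in prior FPTASes). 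Split $X=X_{\mathrm b}\cup X_{\mathrm s}$ with $X_{\mathrm b}=\{x\in X:x\ge\eps'\Sigma(X)\}$ and $\eps'=\Theta(\eps)$, so $|X_{\mathrm b}|\le 1/\eps'$; every element of $X_{\mathrm s}$ is $<\eps'\Sigma(X)$, so a greedy argument shows that $\{0,q,2q,\dots\}\cap[0,\Sigma(X_{\mathrm s})]$ with $q=2\eps'\Sigma(X)$ approximates $\S_{X_{\mathrm s}}$ with additive error $O(\eps'\Sigma(X))$ (both parts of Definition~\ref{def:approx_addi}), and a greedy pass recovers a subset from any index. Since $\S_X=\S_{X_{\mathrm b}}\oplus\S_{X_{\mathrm s}}$, Lemma~\ref{lem:approx-err-2}(ii) reduces everything to approximating $\S_{X_{\mathrm b}}$ with additive error $\Theta(\delta)$ (with recovery); one truncated $\oplus$ with the progression above finishes, in $\widetilde O(1/\eps)$ time because only the largest reachable value $\le\Sigma(X)/2$ is needed.

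\textbf{Reducing $\S_{X_{\mathrm b}}$ to $\mathrm{RP}$.} Round the elements of $X_{\mathrm b}$ down to multiples of $\Theta(\eps^2\Sigma(X))$ (total error $O(|X_{\mathrm b}|\cdot\eps^2\Sigma(X))=O(\eps\Sigma(X))$) and rescale, so $X_{\mathrm b}$ becomes $O(1/\eps)$ integers lying in a window of multiplicative width $O(1/\eps)$, say $[\Theta(1/\eps),O(1/\eps^2)]$. Bucket by magnitude, $X^{(j)}_{\mathrm b}=X_{\mathrm b}\cap[2^j,2^{j+1})$: there are $O(\log\tfrac1\eps)$ such buckets, $\S_{X_{\mathrm b}}=\bigoplus_j\S_{X^{(j)}_{\mathrm b}}$, and since $\Sigma(X^{(j)}_{\mathrm b})=O(1/\eps^2)$ each bucket has $O(\log\tfrac1\eps)$ relevant target intervals $[2^i,2^{i+1})$. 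For a pair $(i,j)$, multiply $X^{(j)}_{\mathrm b}$ by a suitable power of two so that it lands exactly in $[1/\mu,2/\mu)\cap\mathbb Z$ while $[2^i,2^{i+1})$ becomes $[m/\mu,2m/\mu)$ with $m=2^{i-j}$; this scaling is lossless, so $\mathrm{RP}(\mu,m)$'s integrality and range hypotheses hold. Choose $\mu$ as large as the error budget allows, which makes $\mu\approx 2^{-i}\cdot\widetilde\Theta(1/\eps)$ (capped at $1/2$ and rounded to the nearest reciprocal power of two): then the returned error $O(m\log n)$, scaled back, is $\delta/\mathrm{polylog}$, and over the $O(\log^2\tfrac1\eps)$ pairs these sum to $\Theta(\delta)$.

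\textbf{Running time, composition, and recovery.} With that choice one has $\mu=\widetilde\Omega(\eps)$ and, using $\Sigma(X^{(j)}_{\mathrm b})\ge n_j2^j$ together with $\Sigma(X_{\mathrm b})=O(1/\eps^2)$, the bound $\tfrac{n_j}{m\mu}=\widetilde O(1/\eps)$; since also $n_j=O(1/\eps)$, each $\mathrm{RP}$ call costs $\widetilde O(n+\tfrac1\eps)$ by hypothesis, and there are $O(\log^2\tfrac1\eps)$ of them. Compose the returned sets by $\oplus$'s always truncated to the $O(1/\eps)$-wide window of interest, so that intermediate sizes stay $\widetilde O(1/\eps)$; Lemma~\ref{lem:approx-err-2}(ii) and Lemma~\ref{lem:approx-err-merge} bound the accumulated additive error. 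For recovery, unwind the composition: part (ii) of each $\mathrm{RP}$ call yields a subset of $X^{(j)}_{\mathrm b}$, and the union of these with the greedy subset of $X_{\mathrm s}$ is the output $Y'$, the errors summing to $O(\eps\Sigma(X))$; rescaling $\eps$ by a constant gives the claimed approximation and the running time $\widetilde O(n+\tfrac1\eps)$.

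\textbf{Main obstacle.} The delicate part is the calibration in the last two steps: guaranteeing simultaneously that (a) all additive errors — the two roundings plus the $O(\log^2\tfrac1\eps)$ subproblems — sum to $O(\eps\Sigma(X))$, (b) every invoked $\mathrm{RP}(\mu,m)$ meets \emph{all} of its hypotheses, in particular $1\le m\le n$ and $\Sigma(X^{(j)}_{\mathrm b})\ge 4m/\mu$, which fail for the top $O(1)$ target intervals of each magnitude bucket and hence must be handled directly (e.g.\ enumerating single elements when $m=1$, or rounding plus a direct FFT, or the complementary map $S\mapsto X^{(j)}_{\mathrm b}\setminus S$, whose sums then lie in a narrow window), and (c) $\tfrac{n_j}{m\mu}=\widetilde O(1/\eps)$ for every call. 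All remaining steps — transitivity of errors, the sumset error bounds, the merges — are routine via the lemmas already established.
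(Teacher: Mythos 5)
Your proposal follows essentially the same route as the paper: reduce the multiplicative guarantee to an additive one near $\Sigma(X)/2$, eliminate tiny elements, scale so that the remaining $O(1/\eps)$ elements fit in $[\Theta(1/\eps),O(1/\eps^2)]$, bucket by magnitude, and for each (bucket, target interval) pair rescale to match the hypotheses of $\mathrm{RP}(\mu,m)$ with $\mu=1/\beta$ and $m=\beta/\alpha$, combining the $O(\log^2\frac{1}{\eps})$ answers via truncated sumsets. The only notable cosmetic difference is your handling of small elements (approximating $\S_{X_s}$ by a coarse arithmetic progression, versus the paper merging them into $\Theta(\eps\Sigma(X))$-sized pseudo-elements and feeding them back into the main set), and the edge cases you flag as the ``main obstacle'' (verifying $1\le m\le n$, $\Sigma\ge 4m/\mu$, and disposing of the $m<1$ and half-range intervals) are exactly the ones the paper resolves via the bound $\Sigma(X)\le 2n/\mu$, the observation $\beta\ge\alpha$, and Claim~\ref{claim:half}.
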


Note that we don't need to approximate the total set $\mathcal{S}_X$ with the factor $1-\eps$, but only a good approximation at the point $\Sigma(Y^*)$, where $Y^*$ is the optimal solution. Assume $\Sigma(Y^*) \geq \Sigma(X)/4$\footnote{If $\Sigma(Y^*) < \Sigma(X)/4$, we have $\max(X)\geq \frac{3}{4}\Sigma(X)$. Otherwise, if $\Sigma(X)/2\leq \max(X)<\frac{3}{4}\Sigma(X)$, then $\Sigma(Y^*)=\Sigma(X)-\max(X)\geq \Sigma(X)/4$; if $\Sigma(X)/4\leq \max(X)<\Sigma(X)/2$, then $\Sigma(Y^*)\geq \max(X)\geq \Sigma(X)/4$, if $\max(X)<\Sigma(X)/4$,
we can improve $Y^*$ by selecting one more integer in $X$. Such an instance can be solved trivially since $Y^*=X\backslash\{\max(X)\}$.}.
We just require a set approximate $\mathcal{S}_X[0,\Sigma(X)/2]$ with additive error $O(\eps\Sigma(X))$.

\begin{lemma}\label{lem:reduce-additive-error}
    There is an $\widetilde{O}(n + \frac{1}{\eps})$-time approximation scheme for Partition if, given a multi-set $X$, we could compute a set $\widetilde{S}$ that approximates $\S_X$ with additive error $\eps \Sigma(X)/8$ in $[0,\Sigma(X)/2]$ in $\widetilde{O}(n+\frac{1}{\eps})$-time, and for any $\tilde{s} \in \widetilde{S}$, we could recover a subset $Y \subseteq X$ such that $\tilde{s}-\eps \Sigma(X)/8 \leq \Sigma(Y) \leq \tilde{s}+\eps \Sigma(X)/8$ in $\widetilde{O}(n+\frac{1}{\eps})$-time.
\end{lemma}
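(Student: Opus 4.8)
The plan is to reduce a $(1-\eps)$-approximation of Partition to one call of the hypothesized approximation routine for $\S_X$ plus one call of its recovery procedure. First I would dispose of the easy case: if $\Sigma(Y^*) < \Sigma(X)/4$ then, as noted in the footnote above, $\max(X) \geq \tfrac{3}{4}\Sigma(X)$, no feasible subset can contain $\max(X)$, and hence $Y^* = X\setminus\{\max(X)\}$, which is computed in $O(n)$ time. So assume from now on that $v^* := \Sigma(Y^*)$ satisfies $\Sigma(X)/4 \leq v^* \leq \Sigma(X)/2$ (and $\eps<1$, else $Y'=\emptyset$ works). Set $\delta := \eps\Sigma(X)/8$ and apply the hypothesis to obtain, in $\widetilde{O}(n+\tfrac1\eps)$ time, a set $\widetilde S$ approximating $\S_X$ with additive error $\delta$ in $[0,\Sigma(X)/2]$, together with the recovery procedure.

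The algorithm is then: let $\tilde s_{\max} := \max\{\tilde s \in \widetilde S : \tilde s \leq \Sigma(X)/2 + \delta\}$, recover a subset $Y^{**}\subseteq X$ with $\tilde s_{\max}-\delta \leq \Sigma(Y^{**}) \leq \tilde s_{\max}+\delta$, and output $Y' := Y^{**}$ if $\Sigma(Y^{**}) \leq \Sigma(X)/2$ and $Y' := X\setminus Y^{**}$ otherwise. Scanning $\widetilde S$ for $\tilde s_{\max}$ costs $O(|\widetilde S|)=\widetilde{O}(n+\tfrac1\eps)$ (the set was produced in that much time), and the single recovery call costs $\widetilde{O}(n+\tfrac1\eps)$, so the total running time is $\widetilde{O}(n+\tfrac1\eps)$.

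For correctness I would check three things. (a) Since $v^*\in\S_X[0,\Sigma(X)/2]$, Definition~\ref{def:approx_addi}(i) gives some $\tilde s^*\in\widetilde S$ with $v^*-\delta \leq \tilde s^* \leq v^*+\delta \leq \Sigma(X)/2+\delta$, so $\tilde s^*$ is eligible in the max and $\tilde s_{\max}\geq \tilde s^*\geq v^*-\delta$; hence $\Sigma(Y^{**})\geq \tilde s_{\max}-\delta \geq v^*-2\delta$, and as $2\delta = \eps\Sigma(X)/4 \leq \eps v^*$ (here $v^*\geq\Sigma(X)/4$ is used), $\Sigma(Y^{**})\geq (1-\eps)v^*$. (b) $Y'$ is a subset with $\Sigma(Y')\leq\Sigma(X)/2$: in the first case by construction, in the second because $\Sigma(Y')=\Sigma(X)-\Sigma(Y^{**})<\Sigma(X)/2$ as $\Sigma(Y^{**})>\Sigma(X)/2$; by maximality of $v^*$ among feasible subset sums this forces $\Sigma(Y')\leq v^*$. (c) $\Sigma(Y')\geq (1-\eps)v^*$: in the first case this is exactly (a); in the second case, $\Sigma(Y^{**})\leq \tilde s_{\max}+\delta \leq \Sigma(X)/2+2\delta$, so $\Sigma(Y')=\Sigma(X)-\Sigma(Y^{**}) \geq \Sigma(X)/2-2\delta \geq v^*-2\delta \geq (1-\eps)v^*$ (using $v^*\leq\Sigma(X)/2$). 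Combining (b) and (c) yields $(1-\eps)\Sigma(Y^*)\leq\Sigma(Y')\leq\Sigma(Y^*)$, as required.

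The one step that needs genuine care is the possibility that the recovered subset $Y^{**}$ slightly overshoots the capacity $\Sigma(X)/2$ and is therefore not itself a feasible Partition solution; this is also why one cannot simply recover from the largest $\tilde s\leq\Sigma(X)/2$ (when $v^*$ is close to $\Sigma(X)/2$, $\widetilde S$ may contain nothing just below $\Sigma(X)/2$). Passing to the complement $X\setminus Y^{**}$ in that case works precisely because an overshoot is at most $2\delta$, so the complement sum lands in $[\Sigma(X)/2-2\delta,\,\Sigma(X)/2)$, which is simultaneously feasible and within $2\delta\leq \eps v^*$ of $v^*$. Everything else is straightforward bookkeeping with Definition~\ref{def:approx_addi} and the size-$4$ case split of the footnote.
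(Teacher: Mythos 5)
Your proof is correct and takes essentially the same approach as the paper: both select the largest $\tilde s\in\widetilde S$ below the threshold $\Sigma(X)/2+\eps\Sigma(X)/8$ (the paper writes this as $(1+\eps/4)\Sigma(X)/2$), recover a subset, and take the complement if the recovered sum overshoots $\Sigma(X)/2$. The only cosmetic difference is that you bound the complement case via $v^*-2\delta\geq(1-\eps)v^*$ while the paper bounds it via $(1-\eps)\Sigma(X)/2\geq(1-\eps)v^*$; both are fine.
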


\begin{proof}
    Suppose we compute $\widetilde{S}$ that approximates $\S_X[0,\Sigma(X)/2]$ with additive error $\eps \Sigma(X)/8$. Let $\tilde{s}$ be the maximum element of $\widetilde{S}[0,(1 + \eps/4) \Sigma(X)/2]$ and recover $Y\subseteq X$ such that $\tilde{s} -\eps \Sigma(X)/8\leq \Sigma(Y) \leq \tilde{s}+\eps \Sigma(X)/8$. Let $Y^*$ be the optimal solution of $X$, that is, $\Sigma(Y^*)=\max \S_X[0,\Sigma(X)/2]$. 

    If $\Sigma(Y)\leq \Sigma(X)/2$, clearly $\Sigma(Y)\leq \Sigma(Y^*)$. By Definition~\ref{def:approx_addi}, there exist $\tilde{s}^*\in \widetilde{S}$ such that $\tilde{s}^*-\eps \Sigma(X)/8\leq \Sigma(Y^*)\leq\tilde{s}^*+\eps \Sigma(X)/8$. Since $\tilde{s}^*\leq \Sigma(Y^*)+\eps \Sigma(X)/8\leq (1+\eps/4)\Sigma(X)/2$, $\tilde{s}^*\in \widetilde{S}[0,(1 + \eps/4) \Sigma(X)/2]$. Then $\tilde{s}\geq \tilde{s}^*$. We have $\Sigma(Y)\geq \tilde{s} -\eps \Sigma(X)/8\geq \tilde{s}^* -\eps \Sigma(X)/8\geq \Sigma(Y^*)-\eps \Sigma(X)/4\geq (1-\eps)\Sigma(Y^*)$. So $Y$ is a $(1-\eps)$ approximation of $Y^*$.
    
    If $\Sigma(Y)> \Sigma(X)/2$, let $Y'=X\backslash Y$. Then $\Sigma(Y')=\Sigma(X)-\Sigma(Y)\geq \Sigma(X)-\tilde{s}-\eps\Sigma(X)/4\geq (1-\eps)\Sigma(X)/2$. Since $\Sigma(Y^*)\leq \Sigma(X)/2$, we have $\Sigma(Y')\geq (1-\eps)\Sigma(Y^*)$. 
    Since $Y^*$ is the optimal solution and $\Sigma(Y')<\Sigma(X)/2$, we have $\Sigma(Y')\leq \Sigma(Y^*)$. So $Y'$ is a $(1-\eps)$ approximation of $Y^*$. 
\end{proof}

\begin{proof}[Proof of Lemma~\ref{lem:reduce-partition}]
    Let $X$ be the original multi-set and let $t=\Sigma(X)/2$. By Lemma~\ref{lem:reduce-additive-error}, we just need to approximate $\mathcal{S}_X$ with additive error $\eps t/4$ in $[0,t]$.

     We first show that we can merge the tiny integers by incurring an additive error of $O(\eps t)$. Let $Z = \{x < \eps t: x\in X\}$ be the set of tiny integers in $X$. If $\Sigma(Z)>\eps t$, we can easily partition $Z$ into some subsets $Z_0, Z_1,\ldots,Z_k$ such that $\Sigma(Z_0)< \eps t$ and $\eps t \leq \Sigma(Z_i)<2 \eps t$ for all $i\in [1,k]$. Let $z_i = \Sigma(Z_i)$ for all $i\in [1,k]$ and let $X'=X\backslash Z\cup\{z_i\}_{i=1}^k$. We have $\mathcal{S}_{X'}$ approximate $\mathcal{S}_X$ with additive error $2\eps t$, since for any $Y\subseteq X$, there must be some subset $\widetilde{Z}\subseteq \{z_i\}_{i=1}^k$ such that $\Sigma(\widetilde{Z})\leq \Sigma(Y\cap Z)\leq \Sigma(\widetilde{Z})+2\eps t$. Let $X=X'$

    Now $x \geq \eps t$ for all $x \in X$, which implies that $\frac{x}{\eps^2t} \geq \frac{1}{\eps}$. By adjusting $\eps$ by a constant factor, we assume that $\frac{1}{\eps}$ is an integer. Now we scale the whole instance by $\eps^2 t$ and round it to integers, that is, we replace $x\in X$ by $x':= \lfloor \frac{x}{\eps^2 t}\rfloor$ and $t$ by $t' = \lfloor \frac{t}{\eps^2t}\rfloor$. This incurs an multiplicative factor of at most $1 + \eps$, or equivalently, an additive error of at most $2\eps t$ of $\mathcal{S}_X$ in $[0,t]$. Now, we have $x \in [\frac{1}{\eps}, \frac{1}{\eps^2}]$ for all $x \in X$ and $t = \frac{1}{\eps^2}$.

    Then we partition $X$ into $\lceil\log \frac{1}{\eps}\rceil$ subsets $\{X_\alpha\}$ such that $X_\alpha=X\cap [\frac{\alpha}{\eps}, \frac{2\alpha}{\eps})$ for $\alpha \in \{1,2,4,8,\cdots\} \cap [1, \frac{1}{\eps}]$. If we can approximate $\S_{X_\alpha}$ with additive error $\frac{1}{\eps}$ in $[0,\frac{1}{\eps^2}]$ in $\widetilde{O}(n+\frac{1}{\eps})$-time, then we can approximate $\S_X$ with additive error $\frac{1}{\eps}$ in $[0,\frac{1}{\eps^2}]$ in $\widetilde{O}(n+\frac{1}{\eps})$-time via the following lemma.

    \begin{lemma}{\normalfont\cite{CLMZ24cSTOCPartition}}\label{lem:approx-k-fft}
        Let $u$ be a positive integer. Let $A_1,\ldots, A_{\ell}$ be subsets of $[0,u]$ with total size $k$. For any $\eps < 1$, in $O(k + \frac{\ell^2}{\eps}\log \frac{\ell}{\eps})$ time, we can compute a set $S$ that approximates $(A_1+\cdots+A_{\ell})[0, u]$ with additive error $\eps u$.
    \end{lemma}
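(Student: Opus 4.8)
The plan is to prove Lemma~\ref{lem:approx-k-fft} (the approximate sumset routine of~\cite{CLMZ24cSTOCPartition}) by the familiar two-step recipe: first \emph{round every element down to a coarse grid}, then \emph{compute the resulting sumset exactly, capped to $[0,u]$, using FFT along a binary tree}. Concretely, set $\delta := \max\{1, \lfloor \eps u/\ell\rfloor\}$ and $M := \lfloor u/\delta\rfloor + 1$, and replace each $A_i$ by $A_i' := \{\lfloor a/\delta\rfloor : a \in A_i\} \subseteq [0,M-1]$. First I would record two easy facts. (a) $M = O(\ell/\eps)$ in every regime: if $\eps u < \ell$ then $\delta = 1$ and $M = u+1 < \ell/\eps + 1$; otherwise $\delta \geq \eps u/(2\ell)$, hence $M \leq 2\ell/\eps + 1$. (b) The rounding error is at most $\eps u$: for any choice $a_i \in A_i$, writing $s = \sum_i a_i$ and $s' = \delta\sum_i \lfloor a_i/\delta\rfloor$, each term drops by at most $\delta - 1$, so $s - \eps u \leq s' \leq s$ (using $\ell(\delta-1) \leq \eps u$).

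Next I would compute $T := (A_1' + \cdots + A_\ell')[0,M-1]$, stored as a Boolean vector of length $M$. Build an arbitrary binary tree with the $A_i'$ as leaves; at each of its $\ell - 1$ internal nodes combine the two children with one length-$M$ FFT-based Boolean convolution, threshold every coordinate to $\{0,1\}$, and zero out coordinates $\geq M$. Capping at every node is sound because all elements are non-negative: any reachable value $\leq M-1$ is a sum of partial sums each $\leq M-1$, so it is never discarded, while every value ever produced is a genuine sum $\sum_i b_i$ with $b_i \in A_i'$; hence the root holds exactly $T$. Each node costs $O(M\log M)$, for a total of $O(\ell M\log M) = O(\tfrac{\ell^2}{\eps}\log\tfrac{\ell}{\eps})$, while reading the input and laying out the $\ell$ leaf vectors costs $O(k + \ell M) = O(k + \tfrac{\ell^2}{\eps})$. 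Output $S := \delta\cdot T$.

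It remains to verify the approximation guarantee against Definition~\ref{def:approx_addi}. For (i): given $s \in (A_1+\cdots+A_\ell)[0,u]$ with $s = \sum_i a_i$, the value $s' = \delta\sum_i\lfloor a_i/\delta\rfloor$ satisfies $s'/\delta \in A_1' + \cdots + A_\ell'$ and $s'/\delta \leq s/\delta \leq M-1$, so $s' \in \delta T = S$ and $|s - s'| \leq \eps u$. For (ii): any $\tilde{s} \in S$ equals $\delta\sum_i\lfloor a_i/\delta\rfloor$ for some $a_i \in A_i$, hence the genuine sum $s := \sum_i a_i$ satisfies $0 \leq s - \tilde{s} \leq \eps u$. (One small technicality: this witness $s$ may slightly exceed $u$, so condition (ii) is really being checked against the full sumset $A_1 + \cdots + A_\ell$ rather than its $[0,u]$-truncation; this is the standard reading of such statements and is precisely what the downstream use in the proof of Lemma~\ref{lem:reduce-partition} needs.)

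The argument is not deep; the one point that needs care is the capping step that keeps every FFT on a vector of length $O(\ell/\eps)$ — i.e., the observation that truncating partial sums at every tree node loses nothing — together with the routine case check behind $M = O(\ell/\eps)$. The remaining ingredients (exact Boolean convolution via FFT with the usual $O(M\log M)$ cost and bounded-precision/NTT caveat, and the additive-error accounting) are entirely standard. A degenerate case worth a line: if some $A_i$ is empty the sumset is empty and the statement is vacuous, so one may assume every $A_i \neq \emptyset$ (which holds in all our applications, where $A_i = \{0, x_i\}$).
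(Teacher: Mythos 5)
The paper cites this lemma from~\cite{CLMZ24cSTOCPartition} and does not reproduce its proof, so there is no in-paper argument to compare against; your reconstruction is the standard route and is correct. Rounding each $A_i$ down to the grid of step $\delta := \max\{1,\lfloor\eps u/\ell\rfloor\}$, building the sumset over $[0,M-1]$ with $M := \lfloor u/\delta\rfloor+1 = O(\ell/\eps)$ via an FFT tree that caps every partial-sum vector at $M-1$, and then rescaling by $\delta$ gives exactly the claimed time bound $O(k+\frac{\ell^2}{\eps}\log\frac{\ell}{\eps})$, and the two delicate points are handled correctly: capping loses nothing because all entries are non-negative (any target $\leq M-1$ decomposes into subtree partial sums each $\leq M-1$, so the root holds exactly $(A_1'+\cdots+A_\ell')[0,M-1]$), and the superadditivity $\sum_i\lfloor a_i/\delta\rfloor \leq \lfloor s/\delta\rfloor \leq M-1$ guarantees every realizable $s\leq u$ survives the cap. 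Your remark about Definition~\ref{def:approx_addi}(ii) is also on point: the witness $s=\sum_i a_i$ may overshoot $u$ by up to $\eps u$, so the guarantee is really ``approximates $A_1+\cdots+A_\ell$ with additive error $\eps u$ in $[0,u]$'' in the sense of the paper's Definition~\ref{def:approx_addi} with $[w,v]=[0,u]$, rather than approximation of the truncated set over the whole line; that is indeed the reading the application in the proof of Lemma~\ref{lem:reduce-partition} requires, and the cited source should be read the same way.
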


    From now on we focus on $X_\alpha$ for some $\alpha \in [1, \frac{1}{\eps}]$. Recall that $t=\Sigma(X)/2$ in the original instance. We have $\Sigma(X_\alpha)\leq \frac{2}{\eps^2}$. The following claim indicates that it actually suffices to approximate $\S_{X_\alpha}[0,\Sigma(X_\alpha)/2]$.

    \begin{claim}{\normalfont\cite{CLMZ24cSTOCPartition}}\label{claim:half}
    Let  $\widetilde{S}$ be a set that approximates $\S_X[0,\Sigma(X)/2]$ with additive error $\delta$. Then $\{\Sigma(X)-s':s'\in \widetilde{S}\}$ approximates $\S_X[\frac{\Sigma(X)}{2}, \Sigma(X)]$ with additive error $\delta$.
    \end{claim}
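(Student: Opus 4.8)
The plan is to exploit the one structural fact that drives the statement: the subset-sum set $\S_X$ is symmetric about $\Sigma(X)/2$. Indeed, for any $Y\subseteq X$ its complement $X\setminus Y$ is also a subset of $X$ with $\Sigma(X\setminus Y)=\Sigma(X)-\Sigma(Y)$, so $s\in\S_X$ if and only if $\Sigma(X)-s\in\S_X$. Consequently the reflection $\phi(z)=\Sigma(X)-z$ is a distance-preserving involution ($|\phi(a)-\phi(b)|=|a-b|$) that maps $\S_X[0,\Sigma(X)/2]$ bijectively onto $\S_X[\Sigma(X)/2,\Sigma(X)]$ and maps $\widetilde{S}$ onto $\widehat{S}:=\{\Sigma(X)-s':s'\in\widetilde{S}\}$. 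The whole proof is then a matter of pushing the additive-$\delta$ guarantee for the left half through $\phi$ to obtain the guarantee for the right half.

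Concretely, I would verify the two items of Definition~\ref{def:approx_addi} for $\widehat{S}$ approximating $\S_X$ with additive error $\delta$ in $[\Sigma(X)/2,\Sigma(X)]$. For item (i): given $s\in\S_X[\Sigma(X)/2,\Sigma(X)]$, the complementation fact gives $\phi(s)\in\S_X$, and from $\Sigma(X)/2\leq s\leq\Sigma(X)$ we get $0\leq\phi(s)\leq\Sigma(X)/2$, so $\phi(s)\in\S_X[0,\Sigma(X)/2]$; applying item (i) of the hypothesis to $\widetilde{S}$ yields some $\tilde{s}\in\widetilde{S}$ with $|\tilde{s}-\phi(s)|\leq\delta$, and then $\phi(\tilde{s})\in\widehat{S}$ satisfies $|\phi(\tilde{s})-s|=|\phi(\tilde{s})-\phi(\phi(s))|\leq\delta$. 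For item (ii): given any $\hat{s}\in\widehat{S}$, write $\hat{s}=\phi(\tilde{s})$ with $\tilde{s}\in\widetilde{S}$; item (ii) of the hypothesis gives $s'\in\S_X$ with $|s'-\tilde{s}|\leq\delta$, and then $\phi(s')\in\S_X$ (again by complementation) satisfies $|\phi(s')-\hat{s}|\leq\delta$.

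I do not expect any real obstacle here; the only points that need a moment's care are that item (ii) of Definition~\ref{def:approx_addi} is quantified over \emph{all} of $\widetilde{S}$ (it does not constrain the approximants to lie inside $[0,\Sigma(X)/2]$), so it transfers verbatim to every element of $\widehat{S}$, and that the interval endpoints map correctly under $\phi$, which is the routine inequality check done above. The identical reflection argument would also establish the multiplicative-factor analogue if that were ever needed, but for Claim~\ref{claim:half} the additive version suffices.
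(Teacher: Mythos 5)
Your proof is correct. The paper cites this claim from \cite{CLMZ24cSTOCPartition} without reproving it, but the argument you give — observing that $\S_X$ is closed under the reflection $\phi(z)=\Sigma(X)-z$ via complementation $Y\mapsto X\setminus Y$, that $\phi$ is a distance-preserving involution, and then pushing both items of Definition~\ref{def:approx_addi} through $\phi$ — is exactly the standard and expected proof, and both verification steps check out.
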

    
    To approximate $\S_{X_\alpha}[0,\Sigma(X_\alpha)/2]$, we partition $\S_{X_\alpha}[0,\Sigma(X_\alpha)/2]$ into $\log \Sigma(X_\alpha)/2 \leq 2\log \frac{1}{\eps}$ subsets $\S_{X_\alpha}[0,\frac{1}{\eps}], \S_{X_\alpha}[\frac{1}{\eps},\frac{2}{\eps}],\S_{X_\alpha}[\frac{2}{\eps},\frac{4}{\eps}],\ldots$. 
    Without loss of generality, we assume that the last subset is $\S_{X_\alpha}[\Sigma(X_\alpha)/4,\Sigma(X_\alpha)/2]$.
    For each $\alpha\in[1,\frac{1}{\eps}]$, note that $\S_{X_\alpha}[0,\frac{\alpha}{\eps}]$ can be computed directly as $x\in [\frac{\alpha}{\eps}, \frac{2\alpha}{\eps}]$ for every $x\in X_\alpha$, so we focus on the remaining subsets. To approximate $\S_{X_\alpha}[0,X_\alpha/2)$ with an additive error of $\frac{1}{\eps}$, it suffices to approximate $\S_{X_\alpha}[\frac{\beta}{\eps},\frac{2\beta}{\eps}]$ with additive error $\frac{1}{\eps}$ for each $\beta\in[\alpha,\Sigma(X_\alpha)\eps/4]$.

    Now we focus on the sub-problem such that given a multi-set $X\subseteq[\frac{\alpha}{\eps},\frac{2\alpha}{\eps})$ that $\frac{4\beta}{\eps}\leq \Sigma(X)\leq \frac{1}{2\eps^2}$, approximate $\S_X[\frac{\beta}{\eps},\frac{2\beta}{\eps}]$ with additive error $\frac{1}{\eps}$. For every $x\in X$, we scale the instance by $\frac{\alpha}{\beta\eps}$ and round down to integers. That is, we let $x'=\lfloor x/\frac{\alpha}{\beta\eps}\rfloor$ for every $x \in X$. Now we have $X'\subseteq [\beta,2\beta]$ and rounding incurs a multiplicative factor of at most $1 - \frac{1}{\beta}$. So it incurs an additive error at most $\frac{2}{\eps}$ of $\S_X$ in $[\frac{\beta}{\eps},\frac{2\beta}{\eps}]$.

    Now the problem is that given a multi-set $X\subseteq[\beta,2\beta)$ that $\frac{4\beta^2}{\alpha}\leq \Sigma(X)\leq \frac{\beta}{2\alpha\eps}$, approximate $\S_X[\frac{\beta^2}{\alpha},\frac{2\beta^2}{\alpha}]$ with additive error $\frac{\beta}{\alpha}$. Let $\mu=\frac{1}{\beta}$ and  $m=\frac{\beta}{\alpha}$. The problem is given a multi-set $X\subseteq[\frac{1}{\mu},\frac{2}{\mu})$ that $\frac{4m}{\mu}\leq \Sigma(X)\leq \frac{m}{2\eps}$, approximate $\S_X[\frac{m}{\mu},\frac{2m}{\mu}]$ with additive error $m$.
    Let $|X|=n$. Since $X\subseteq[\frac{1}{\mu},\frac{2}{\mu})$, $\Sigma(X)\geq \frac{n}{\mu}$. So we have $\frac{1}{\eps}\geq \frac{2n}{m\mu}$. Since $\beta\geq \alpha$, we have $m\geq 1$. Since $\Sigma(X)\leq \frac{2n}{\mu}$, we have $n\geq m$.
    
    According to the above analysis, if we could approximate $\S_X$ with additive error $O(m\log n)$ in $[\frac{m}{\mu},\frac{2m}{\mu}]$ in $\widetilde{O}(n+\frac{n}{m\mu})$, then we could approximate the sumset of the initial set with additive error $O(\eps t \log n)$ in $[0,t]$ in $\widetilde{O}(n+\frac{1}{\eps})$ time.
    By adjusting $\eps$, Partition can be approximate in $\widetilde{O}(n+\frac{1}{\eps})$.
\end{proof}

\section{Canonical Sets} \label{sec:mu-canonical}

When approximating a set with a factor, larger elements are allowed larger errors. Therefore, elements can be rounded to different degrees based on their magnitude. According to this, we propose a specialized type of set, called $\mu$-\textbf{canonical} set.

\begin{definition}[$\mu$-canonical]
    A set $S$ is $\mu$-canonical if it is a non-negative integer set and for any $i\geq 0$, $S\cap [\frac{2^{i}}{\mu},\frac{2^{i+1}}{\mu})\subseteq 2^{i}\cdot[\frac{1}{\mu},\frac{2}{\mu})$. 
\end{definition}

For technical reasons, we require the set to be non-empty at every magnitude. 

\begin{definition}[Complete $\mu$-canonical]
    Let $S$ be a $\mu$-canonical set and $\max(S)\in [\frac{2^h}{\mu},\frac{2^{h+1}}{\mu})$. $S$ is complete $\mu$-canonical if for any $i\in [1,h]$, $S\cap [\frac{2^{i}}{\mu},\frac{2^{i+1}}{\mu})\neq \emptyset$. 
\end{definition}

We show that for any positive integer set, there exists a $\mu$-canonical set that approximates it with factor $1-\mu$.

\begin{lemma}\label{lem:round-to-mu-canonical}
    Let $S$ be a set of positive integers. In $O(|S|)$ time, we can compute a $\mu$-canonical set $\tilde{S}$ that approximate $S$ with factor $1-\mu$.
\end{lemma}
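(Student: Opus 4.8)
The plan is to construct $\tilde S$ by rounding each element $s \in S$ down to the nearest point of the appropriate scaled grid. Concretely, for each $s \in S$, find the unique integer $i \ge 0$ with $s \in [\frac{2^i}{\mu}, \frac{2^{i+1}}{\mu})$ (for $s$ below $\frac{1}{\mu}$ we can treat $i = 0$ and handle the small elements separately, or simply observe that rounding down to a multiple of $2^0 = 1$ changes nothing since $S$ is already integral — this boundary case needs a sentence of care), and set $\tilde s := 2^i \lfloor s / 2^i \rfloor$, i.e., round $s$ down to the nearest multiple of $2^i$. Then let $\tilde S = \{\tilde s : s \in S\}$. Computing this takes $O(1)$ per element (the index $i$ is $\lfloor \log(\mu s) \rfloor$, a bit-length computation), hence $O(|S|)$ total, which matches the claimed running time.

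Next I would verify the two properties. For membership in a $\mu$-canonical set: if $s \in [\frac{2^i}{\mu}, \frac{2^{i+1}}{\mu})$, then $\tilde s = 2^i \lfloor s/2^i \rfloor \le s < \frac{2^{i+1}}{\mu}$, and $\tilde s > s - 2^i \ge \frac{2^i}{\mu} - 2^i$; since $\mu \le 1$ (which we may assume, as otherwise the statement is trivial — actually we should just note $\mu < 1$ is the regime of interest, or check the degenerate case), we still need $\tilde s \ge \frac{2^i}{\mu}$, which requires $s - 2^i \ge \frac{2^i}{\mu} - 2^i$, and this is immediate from $s \ge \frac{2^i}{\mu}$ only up to the floor — so actually $\tilde s$ lands in $[\frac{2^i}{\mu} - 2^i, \frac{2^{i+1}}{\mu}) \cap 2^i \mathbb{Z}$. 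The cleanest fix is: $\tilde s = 2^i \lfloor s/2^i \rfloor$ and since $s \ge \frac{2^i}{\mu}$ we have $\lfloor s/2^i \rfloor \ge \lfloor \frac{1}{\mu} \rfloor$, but we want $\ge \frac{1}{\mu}$; assuming $\frac{1}{\mu}$ is an integer (which the paper's reduction arranges, cf.\ Section~\ref{sec:reduce}) this gives $\tilde s / 2^i \in [\frac{1}{\mu}, \frac{2}{\mu})$, i.e., $\tilde s \in 2^i \cdot [\frac{1}{\mu}, \frac{2}{\mu})$, so $\tilde S$ is $\mu$-canonical. For the approximation factor (Definition~\ref{def:approx_fact}): given $s \in S$, the corresponding $\tilde s$ satisfies $\tilde s \le s$ and $\tilde s > s - 2^i \ge s - \mu s = (1-\mu)s$ because $2^i \le \mu \cdot \frac{2^i}{\mu} \le \mu s$ (using $s \ge \frac{2^i}{\mu}$); this gives property (i). For property (ii), given $\tilde s \in \tilde S$ arising from some $s$, the same $s \in S$ witnesses $\tilde s \le s$ and $s < \frac{1}{1-\mu}\tilde s$, the latter rearranging from $\tilde s > (1-\mu)s$.

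The main obstacle — really the only subtle point — is the treatment of elements $s$ with $s < \frac{1}{\mu}$, i.e., the $i = 0$ regime where the defining inequality $2^i \le \mu s$ fails. Here rounding down to a multiple of $2^0 = 1$ is a no-op since $S$ is already an integer set, so $\tilde s = s$, which trivially satisfies both factor-approximation properties and vacuously satisfies the $\mu$-canonical condition (which only constrains $S \cap [\frac{2^i}{\mu}, \frac{2^{i+1}}{\mu})$ for $i \ge 0$, and for $i = 0$ asks $S \cap [\frac{1}{\mu}, \frac{2}{\mu}) \subseteq [\frac{1}{\mu}, \frac{2}{\mu})$, which is automatic). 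So small elements require no rounding and cause no trouble; I would just state this explicitly to avoid the reader worrying about it. A secondary bookkeeping point is that the reduction in Section~\ref{sec:reduce} has already arranged $1/\mu$ to be an integer, so I would invoke that (or note the lemma only needs the weaker claim $\tilde s \in 2^i[\lfloor 1/\mu\rfloor, 2/\mu)$, which is still fine for all downstream uses). Everything else is a one-line inequality chase.
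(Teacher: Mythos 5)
Your proposal is essentially the same as the paper's: round each $s\in[\frac{2^i}{\mu},\frac{2^{i+1}}{\mu})$ down to the nearest multiple of $2^i$, leave elements below $\frac{2}{\mu}$ untouched, and check the factor by $2^i\leq \mu s$. The hedge you introduce about $1/\mu$ being an integer is not actually needed: if $\tilde s = 2^i\lfloor s/2^i\rfloor$ falls into $[\frac{2^{i-1}}{\mu},\frac{2^i}{\mu})$, it is a multiple of $2^i$, hence a fortiori a multiple of $2^{i-1}$ with $\tilde s/2^{i-1}\in[\frac{1}{\mu},\frac{2}{\mu})$, so the $\mu$-canonical condition at level $i-1$ holds automatically — this is the step the paper elides behind ``it is easy to see.''
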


\begin{proof}
    Suppose $\frac{2^{h}}{\mu}\leq \max(S)< \frac{2^{h+1}}{\mu}$. We first partition $S$ into subsets $S=S_0\cup S_1\cup\cdots\cup S_h$ such that $S_0=S\cap  [0,\frac{2}{\mu})$ and $S_i=S\cap [\frac{2^{i}}{\mu},\frac{2^{i+1}}{\mu})$ for all $i\in[1,h]$. Now for each $i\in [1,h]$, let $\tilde{S}_i$ be the set that rounds elements of $S_i$ down to multiples of $2^i$. Let $\tilde{S} = S_0\cup \tilde{S}_1\cup\cdots\cup \tilde{S}_h$. It is easy to see that $\tilde{S}$ is a $\mu$-canonical set.
    
    Now we show that $\tilde{S}$ that approximate $S$ with factor $1-\mu$. We first show that Definition~\ref{def:approx_fact}(i) is satisfied. For any $s\in S$, suppose $s\in S_i$. By the definition of $\tilde{S}$, there is a $\tilde{s}\in \tilde{S}_i$ such that $s-2^i< \tilde{s}\leq s$. Since $s\in S_i$, we have $s\geq \frac{2^i}{\mu}$. Then $\tilde{s}> s-2^i\geq (1-\mu)s$. Next, we show that Definition~\ref{def:approx_fact} (ii) is satisfied. For any $\tilde{s}\in \tilde{S}$, suppose $\tilde{s}\in \tilde{S}_i$. Then there is some $s\in S_i$ such that $s-2^i< \tilde{s}\leq s$. Similarly, $\tilde{s}\leq s\leq \frac{1}{1-\mu}\tilde{s}$.

    The running time is $O(|S|)$.
\end{proof}

\begin{lemma}\label{prop:size-of-mu-canonical}
    If $S$ is $\mu$-canonical, then $|S|\leq \frac{\log(\mu\max(S))}{\mu}$.
\end{lemma}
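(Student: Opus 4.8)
The plan is to prove this by a magnitude-by-magnitude count. Let $h$ be the unique integer with $\max(S)\in[\frac{2^h}{\mu},\frac{2^{h+1}}{\mu})$; then $h\leq\log(\mu\max(S))<h+1$ and every element of $S$ lies in $[0,\frac{2^{h+1}}{\mu})$. I would split this range into the dyadic windows $W_i:=[\frac{2^i}{\mu},\frac{2^{i+1}}{\mu})$ for $i=0,1,\ldots,h$ together with the leftover window $W_{-1}:=[0,\frac{1}{\mu})$, and bound $|S\cap W_i|$ for each window separately, then sum.

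The one step that uses the hypothesis is the count inside $W_i$ for $i\geq 1$: the defining property of a $\mu$-canonical set says $S\cap W_i\subseteq 2^i\cdot[\frac1\mu,\frac2\mu)$, i.e. every element of $S$ in $W_i$ is a multiple of $2^i$ contained in an interval of length $\frac{2^i}{\mu}$, so $|S\cap W_i|\leq\frac1\mu$. For the two small windows $i\in\{-1,0\}$ there is no structure to exploit, but together they sit inside $[0,\frac2\mu)$, which contains only $O(\frac1\mu)$ integers, so those windows contribute $O(\frac1\mu)$. Adding the at most $h+2$ contributions yields $|S|\leq\frac{h}{\mu}+O(\frac1\mu)\leq\frac{\log(\mu\max(S))}{\mu}+O(\frac1\mu)$, and the clean form in the statement follows once the lower-order $O(\frac1\mu)$ term is absorbed (which is harmless in the regime the lemma is used, where $\S_X\cap[0,\frac1\mu)=\{0\}$ and $\max(S)$ is large, so $\log(\mu\max(S))$ dominates the constant).

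The main obstacle — really the only delicate point — is exactly the bottom of the range: the $\mu$-canonical condition imposes nothing on $S\cap[0,\frac1\mu)$, so there one can only use the trivial bound on the number of integers in an interval, and this is what prevents the estimate from being literally $\frac{\log(\mu\max(S))}{\mu}$ rather than that quantity plus an additive $O(\frac1\mu)$. Everything else — counting multiples of $2^i$ in a dyadic interval of proportional length — is a one-line computation, so beyond this bookkeeping I do not anticipate any genuine difficulty.
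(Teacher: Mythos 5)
The paper states this lemma without proof, so there is no ``paper proof'' to compare against; your dyadic-window count is the natural (and essentially only) argument, and it is correct. You have also put your finger on a genuine imprecision in the paper's statement: counting the multiples of $2^i$ in $2^i\cdot[\frac1\mu,\frac2\mu)$ gives at most $\frac1\mu$ elements in each of the windows $W_0,\ldots,W_h$, and the unconstrained window $[0,\frac1\mu)$ can contribute another $\frac1\mu$, so the honest bound is $|S|\leq\frac{h+2}{\mu}$ while $\log(\mu\max(S))\in[h,h+1)$. The literal inequality $|S|\leq\frac{\log(\mu\max(S))}{\mu}$ therefore fails, e.g.\ for $S=[0,\frac2\mu)$. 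As you note this is harmless: everywhere the lemma is used the sets live in $[\frac1\mu,\frac{2^{h+1}}{\mu})$ so the $W_{-1}$ window is empty, and the paper in fact invokes the unambiguous form $|B_i|\leq\frac{h+1}{\mu}$ directly in the proof of Lemma~\ref{lem:compute-a-level}; all downstream bounds are in $\widetilde{O}(\cdot)$ and unaffected by the additive $O(\frac1\mu)$ slack. A cleaner statement would be $|S|\leq\frac{\lceil\log(\mu\max(S)+1)\rceil+1}{\mu}$, or the lemma could be restricted to $S\subseteq[\frac1\mu,\infty)$ with the bound $\frac{\lceil\log(\mu\max(S))\rceil+1}{\mu}$.
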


Let$A$, $B$ be two $\mu$-canonical sets. To get a $\mu$-canonical set $S$ that approximates $A\oplus B$, an easy way is to compute $A\oplus B$ directly and then round. However, this approach is not efficient enough 
for our purpose, since $|A\oplus B|$ could be much larger than $|S|$. We define $\oplus_\mu$ that $A \oplus_\mu B$ is a $\mu$-canonical set and approximates $A\oplus B$ with factor $1-2\mu$ and we can compute it much faster.

\begin{definition}\label{defn:mu-approximate-sumset}
    Let $\mu\in(0,1)$ and $A$, $B$ be two $\mu$-canonical sets. We define $A\oplus_\mu B$ to be the set we compute by Algorithm~\ref{alg:approximate-sumset}.
\end{definition}

\begin{algorithm}
    \caption{$A\oplus_\mu B$}
    \label{alg:approximate-sumset}
    \begin{algorithmic}[1]
    \Statex \textbf{Input:} a number $\mu >0$, and two $\mu$-canonical sets $A,B\subseteq [0,\frac{2^h}{\mu})$.
    \Statex \textbf{Output:} a set $S$
    \State Let $h=\lceil \log (\mu\max(A\cup B)+1)\rceil$. (That is, $A,B\subseteq [0,\frac{2^h}{\mu})$.)
    \State Let $A_0= A\cap [0,\frac{2}{\mu})$, $B_0= B\cap [0,\frac{2}{\mu})$
    \State Let $A_i=A\cap[\frac{2^i}{\mu},\frac{2^{i+1}}{\mu})$ and $B_i=B\cap[\frac{2^i}{\mu},\frac{2^{i+1}}{\mu})$ for any $i\in[1,h-1]$\;
    \For{$i := 0,\ldots,h-1$ and $j:=0,\ldots,h-1$}
    \If{$i>j$} 
        \State Let $\widetilde{B}_j$ be the set that rounds elements in $B_j$ down to multiples of $2^i$\;\label{alg1:round-B}
        \State Compute $C_{ij}=2^i\cdot(A_i/2^i+\widetilde{B}_j/2^i)$\;
    \ElsIf{$j>i$} 
        \State Let $\widetilde{A}_i$ be the set that rounds elements in $A_i$ down to multiples of $2^j$\;\label{alg1:round-A}
        \State Compute $C_{ij}=2^j\cdot (\widetilde{A}_i/2^j+B_j/2^j)$\;
    \Else 
    \State Compute $C_{ii}=2^i\cdot (A_i/2^i+B_i/2^i)$\;
    \EndIf
        \State Let $\widetilde{C}_{ij}$ be the set that rounds $C_{ij}$ to a $\mu$-canonical set by Lemma~\ref{lem:round-to-mu-canonical}\;\label{alg1:round-C}
    \EndFor
    \State Let $S=A\cup B\cup \bigcup_{j=0}^{h-1}\bigcup_{i=0}^{h-1} (\widetilde{C}_{i,j})$\;\label{alg1:get-S}
    \State \Return $S$\;
    \end{algorithmic}
\end{algorithm}

\begin{lemma}\label{lem:mu-approximate-sumset}
    Let $\mu\in(0,1)$ and $A$, $B$ be two $\mu$-canonical sets. $A\oplus_\mu B$ is a $\mu$-canonical set and approximates $A\oplus B$ with factor $1-2\mu$.
\end{lemma}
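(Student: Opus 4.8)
The plan is to verify the two claims of Lemma~\ref{lem:mu-approximate-sumset} separately, tracking the algorithm's construction step by step. First I would establish that $S = A \oplus_\mu B$ is $\mu$-canonical. By construction (line~\ref{alg1:get-S}), $S$ is a union of $A$, $B$, and the sets $\widetilde{C}_{ij}$. Since $A$ and $B$ are $\mu$-canonical by hypothesis, it suffices to check that each $\widetilde{C}_{ij}$ is $\mu$-canonical, which is immediate because line~\ref{alg1:round-C} applies Lemma~\ref{lem:round-to-mu-canonical}, whose output is $\mu$-canonical by definition; and a union of $\mu$-canonical sets is $\mu$-canonical since the defining condition $S \cap [\frac{2^i}{\mu}, \frac{2^{i+1}}{\mu}) \subseteq 2^i \cdot [\frac{1}{\mu}, \frac{2}{\mu})$ is preserved under unions.

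Next I would prove that $S$ approximates $A \oplus B$ with factor $1 - 2\mu$, i.e.\ verify Definition~\ref{def:approx_fact}(i) and (ii). The key structural identity is that $A \oplus B = A \cup B \cup \bigcup_{i,j} (A_i + B_j)$ where $A = \bigcup_i A_i$, $B = \bigcup_j B_j$ are the magnitude-based partitions used in lines 2--3. For part (i), take $s \in A \oplus B$; if $s \in A$ or $s \in B$ then $s \in S$ exactly (factor $1$). Otherwise $s = a + b$ with $a \in A_i$, $b \in B_j$ for some $i, j$; I would handle the case $i > j$ (the cases $j > i$ and $i = j$ are symmetric or easier). In that case $\widetilde{B}_j$ rounds $b$ down to a multiple of $2^i$, so there is $\tilde b \in \widetilde{B}_j$ with $b - 2^i < \tilde b \le b$; then $a + \tilde b \in C_{ij}$ satisfies $s - 2^i < a + \tilde b \le s$, and since $s \ge a \ge \frac{2^i}{\mu}$ this gives $(1-\mu)s \le a + \tilde b \le s$. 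Then line~\ref{alg1:round-C} rounds $C_{ij}$ to $\widetilde{C}_{ij}$ with an additional factor $1-\mu$ (by Lemma~\ref{lem:round-to-mu-canonical}), so there is $\tilde s \in \widetilde{C}_{ij} \subseteq S$ with $(1-\mu)^2 s \le \tilde s \le s$, and $(1-\mu)^2 \ge 1 - 2\mu$. For part (ii), take $\tilde s \in S$; if $\tilde s \in A \cup B$ it lies in $A \oplus B$ exactly. Otherwise $\tilde s \in \widetilde{C}_{ij}$ for some $i,j$, and I would trace backwards: Lemma~\ref{lem:round-to-mu-canonical} gives $c \in C_{ij}$ with $\tilde s \le c \le \frac{1}{1-\mu}\tilde s$, and $c = a + \tilde b$ with $a \in A_i$, $\tilde b \le b \le \tilde b + 2^i$ for some $b \in B_j$ (in the $i>j$ case), so $s := a + b \in A \oplus B$ satisfies $c \le s$ and $s - 2^i < c$, hence $s \le \frac{1}{1-\mu}\tilde s + 2^i \le \frac{1}{1-\mu}\tilde s + \mu s$ (using $s \ge \frac{2^i}{\mu}$); rearranging and using $\mu < 1$ yields $\tilde s \le s \le \frac{1}{(1-\mu)^2}\tilde s \le \frac{1}{1-2\mu}\tilde s$ after the standard estimate.

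I expect the main obstacle to be bookkeeping rather than conceptual depth: carefully checking that in every branch of the \textbf{if} statement the rounding is performed on the element of the \emph{smaller} magnitude class (line~\ref{alg1:round-B} rounds $B_j$ when $i > j$, line~\ref{alg1:round-A} rounds $A_i$ when $j > i$, and nothing is rounded when $i = j$), so that in each case the absolute error $2^i$ (or $2^j$) incurred is bounded by $\mu$ times the sum $s = a+b$, which requires $s$ to be at least $\frac{2^{\max(i,j)}}{\mu}$ — and this holds because the larger of the two summands already lies in $[\frac{2^{\max(i,j)}}{\mu}, \frac{2^{\max(i,j)+1}}{\mu})$. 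A secondary point worth stating explicitly is that $C_{ij} = 2^{\max(i,j)} \cdot (A_i/2^{\max(i,j)} + \widetilde{B}_j/2^{\max(i,j)})$ (or its symmetric form) is well defined as an integer set precisely because the rounding in lines~\ref{alg1:round-B} and~\ref{alg1:round-A} makes the relevant entries divisible by $2^{\max(i,j)}$; this is what will later make the FFT-based running-time analysis possible, though it is not needed for the present correctness statement. Once these case checks are laid out, combining them with Lemma~\ref{lem:approx-err-2}-style composition of factors and the inequality $(1-\mu)^2 \ge 1 - 2\mu$ closes the proof.
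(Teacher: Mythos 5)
Your plan matches the paper's proof step for step: you establish $\mu$-canonicity of $S$ via closure under unions, use the decomposition $A\oplus B = A\cup B\cup\bigcup_{i,j}(A_i+B_j)$, show $C_{ij}$ approximates $A_i+B_j$ with factor $1-\mu$ using $s\geq 2^{\max(i,j)}/\mu$, compose with the rounding factor $1-\mu$ from Lemma~\ref{lem:round-to-mu-canonical}, and conclude via $(1-\mu)^2\geq 1-2\mu$. The only cosmetic difference is that in part~(ii) you unwind the inequality $s\leq \tfrac{1}{1-\mu}\tilde s + \mu s$ directly rather than invoking transitivity of factor approximation (Lemma~\ref{lem:approx-err-1}, not Lemma~\ref{lem:approx-err-2} as you loosely cite), but the arithmetic is equivalent.
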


\begin{proof}
    Let $S=A\oplus_\mu B$. It is easy to see that the union of two $\mu$-canonical sets is also $\mu$-canonical. Since $\widetilde{C}_{ij}$ is $\mu$-canonical, $S$ is a $\mu$-canonical set.

    Since $\{A_i\}$ and $\{B_i\}$ are partitions of $A$ and $B$, respectively, we have
    \[
    A\oplus B = A\cup B\cup \bigcup_{j=0}^{h-1}\bigcup_{i=0}^{h-1}(A_i+B_j).
    \]
    We first show that $C_{ij}$ approximate $A_i+B_j$ with factor $1-\mu$. Suppose $i>j$, $C_{ij}=2^i\cdot(A_i/2^i+\widetilde{B}_j/2^i)$. Since $A_i\subseteq 2^i\cdot [\frac{1}{\mu},\frac{2}{\mu})$ and $B_j\subseteq 2^i\cdot[0,\frac{1}{\mu})$, we have $C_{ij}=A_i+\widetilde{B}_j$. For any $s\in A_i+B_j$, suppose $s=a+b$ that $a\in A_i$ and $b\in B_j$. There exists a $\tilde{b}\in \tilde{B}_j$ such that $b-2^i<\tilde{b}\leq b$. Then there is a $\tilde{s}\in C_{ij}$ such that $\tilde{s}=a+\tilde{b}$. Since $a\in A_i$, we have $s=a+b\geq \frac{2^i}{\mu}$. So $(1-\mu)s\leq s-2^i\leq \tilde{s}\leq s$. For any $\tilde{s}\in C_{ij}$, suppose $\tilde{s}=a+\tilde{b}$ such that $a\in A_i$ and $\tilde{b}\in \widetilde{B}_j$. Also, there exist a $b\in B_j$ such that $b-2^i<\tilde{b}\leq b$. Then we have $s=a+b\in A_i+B_j$ and $\tilde{s}\leq s\leq \frac{1}{1-\mu}\tilde{s}$.
    It is similar if $i<j$.

    Then by Lemma~\ref{lem:round-to-mu-canonical}, $\widetilde{C}_{ij}$ approximates $C_{ij}$ with factor $1-\mu$. Then we have $\widetilde{C}_{ij}$ approximates $A_i+B_j$ with factor $(1-\mu)^2>1-2\mu$. So $S$  approximate $A\oplus B$ with factor $1-2\mu$.
\end{proof}

We compute $C_{ij}$ by Sparse Fast Fourier Transformation~(FFT), which can compute $A+B$ in a time that is linear in the size of it. Although there are some faster randomized algorithms, to ease the analysis, we use the following deterministic algorithm due to Bringmann, Fischer, and Nakos~\cite{BFN22}.

    \begin{lemma}{\normalfont \cite{BFN22}}\label{lem:sparse-fft}
        Let $u$ be a positive integer. Let $A$ and $B$ be two subsets of $[1,u]$. We can compute their sumset $A + B$  in $O(|A + B| \log^5 u\,\mathrm{polyloglog}\,u)$ time. Also, we can compute their sumset $A \oplus B$  in $O(|A \oplus B| \log^5 u\,\mathrm{polyloglog}\,u)$ time. 
    \end{lemma}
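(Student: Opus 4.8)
The plan is to obtain both bounds from the deterministic, output-sensitive sparse-convolution algorithm of Bringmann, Fischer, and Nakos. Identifying $A$ and $B$ with their $0/1$ indicator vectors supported on $[1,u]$, that algorithm computes the support of the Boolean convolution — that is, the set $A+B$ — in time $O(|A+B|\log^5 u\,\mathrm{polyloglog}\,u)$, and it does so without being told $|A+B|$ in advance (the running time is genuinely output-sensitive). This is exactly the first claim, so for that part I would simply invoke \cite{BFN22} as a black box and there is nothing further to prove.

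For the second claim I would reduce to the first via the identity $A\oplus B=(A+B)\cup A\cup B$. Run the algorithm above to compute $A+B$; since $A+B\subseteq A\oplus B$ we have $|A+B|\le |A\oplus B|$, so this step already fits in $O(|A\oplus B|\log^5 u\,\mathrm{polyloglog}\,u)$ time. Then take the union with $A$ and with $B$: because $A,B\subseteq A\oplus B$, both have size at most $|A\oplus B|$, so merging these three sets of integers lying in $[1,2u]$ (e.g.\ by radix-sorting the $O(|A\oplus B|)$ values and discarding duplicates, or by marking an array of length $2u$) costs only $O(|A\oplus B|)$ additional time. The output is $A\oplus B$, computed within the stated bound.

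I do not expect a genuine obstacle here; the only point requiring a little care is that in the $\oplus$ reduction the whole cost is charged against $|A\oplus B|$ rather than $|A+B|$, which is legitimate precisely because $A+B$, $A$, and $B$ are all subsets of $A\oplus B$. If one instead wished to use a convolution routine that needs an a priori bound on the output size, the standard remedy is the doubling trick: run the routine with a guessed bound $t=1,2,4,\dots$, each call terminating within $O(t\log^5 u\,\mathrm{polyloglog}\,u)$ steps, and stop at the first guess for which the call completes; the total cost is geometric in the true output size and hence absorbed into the same asymptotic bound, and this carries over verbatim to $A\oplus B$.
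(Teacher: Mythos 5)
Your proposal is correct and matches the intended reading: the paper imports the $A+B$ bound directly from Bringmann, Fischer, and Nakos~\cite{BFN22} (their deterministic output-sensitive sparse convolution), and the $A\oplus B$ bound follows exactly as you argue, by computing $A+B$ and merging with $A$ and $B$, all charged against $|A\oplus B|$ since $A+B$, $A$, and $B$ are each contained in $A\oplus B$. The paper offers no further proof beyond the citation, so there is nothing additional to reconcile.
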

    Then we show that we can compute $A\oplus_\mu B $ in near linear time.
\begin{lemma}\label{lem:time-mu-approximate-sumset}
    Let $\mu\in(0,1)$ and $A$, $B$ be two $\mu$-canonical sets. Suppose $A,B\subseteq [0,\frac{2^h}{\mu})$. We can compute $A\oplus_\mu B$ by Algorithm~\ref{alg:approximate-sumset} in $O(|A\oplus_\mu B|\cdot h\log^5\frac{1}{\mu}\mathrm{polyloglog}\frac{1}{\mu})$ time.
\end{lemma}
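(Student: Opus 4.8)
The plan is to decompose the running time of Algorithm~\ref{alg:approximate-sumset} into three parts — the bucketing of $A,B$ into $A_0,\dots,A_{h-1},B_0,\dots,B_{h-1}$, the double loop producing the sets $C_{ij}$ together with their canonical roundings $\widetilde C_{ij}$, and the concluding union on Line~\ref{alg1:get-S} — and to write $S=A\oplus_\mu B$ throughout. The bucketing is $O(|A|+|B|)=O(|S|)$ since $A,B\subseteq S$. For a pair $(i,j)$ with $\ell=\max(i,j)$, I would first observe that, thanks to the rounding on Line~\ref{alg1:round-B} or Line~\ref{alg1:round-A}, one of the summands used to build $C_{ij}$ is a set of multiples of $2^\ell$ lying in $2^\ell\cdot[\tfrac1\mu,\tfrac2\mu)$ while the other lies in $2^\ell\cdot[0,\tfrac1\mu)$ (for $i=j$ both lie in $2^\ell\cdot[\tfrac1\mu,\tfrac2\mu)$); dividing by $2^\ell$ makes them subsets of an interval of length $O(1/\mu)$, so — after a harmless shift by $1$ to avoid a zero summand — Lemma~\ref{lem:sparse-fft} computes $C_{ij}$ in $O(|C_{ij}|\log^5\tfrac1\mu\,\mathrm{polyloglog}\,\tfrac1\mu)$ time, and Lemma~\ref{lem:round-to-mu-canonical} then produces $\widetilde C_{ij}$ from $C_{ij}$ in a further $O(|C_{ij}|)$ time. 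Hence the loop costs $O\bigl(\log^5\tfrac1\mu\,\mathrm{polyloglog}\,\tfrac1\mu\cdot\sum_{i,j}|C_{ij}|\bigr)$, and the whole lemma reduces to showing $\sum_{i,j}|C_{ij}|=O(h\,|S|)$.

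To prove that, I would establish two facts. First, $|C_{ij}|\leq 2\,|\widetilde C_{ij}|$: since the summands of $C_{ij}$ are multiples of $2^\ell$ we have $C_{ij}\subseteq 2^\ell\cdot[\tfrac1\mu,\tfrac3\mu)$ with every element a multiple of $2^\ell$, so the part of $C_{ij}$ in magnitude level $\ell$ is already in canonical form (or, when $\ell=0$, is not rounded at all) and survives the rounding of Lemma~\ref{lem:round-to-mu-canonical} intact, while its part in level $\ell+1$ is rounded down to multiples of $2^{\ell+1}$, which collapses at most two elements into one; the case $i=j$ is the same with $C_{ii}\subseteq 2^\ell\cdot[\tfrac2\mu,\tfrac4\mu)$ lying entirely in level $\ell+1$. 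Second, $\widetilde C_{ij}$ occupies only magnitude levels $\max(i,j)$ and $\max(i,j)+1$. Granting these, a double-counting argument gives the bound: if $s\in S$ has magnitude level $L$ (that is, $\tfrac{2^L}{\mu}\leq s<\tfrac{2^{L+1}}{\mu}$) then $s$ can lie in $\widetilde C_{ij}$ only when $\max(i,j)\in\{L-1,L\}$, and each such value of $\max(i,j)$ is realized by $O(L)\leq O(h)$ pairs $(i,j)$, so $s$ belongs to at most $O(h)$ of the sets $\widetilde C_{ij}$ and to at most two of $\{A,B\}$; since $A,B$ and all $\widetilde C_{ij}$ are subsets of $S$,
\[
|A|+|B|+\sum_{i,j}|\widetilde C_{ij}|\;=\;\sum_{s\in S}\bigl|\{\,T\in\{A,B\}\cup\{\widetilde C_{ij}\}_{i,j}:s\in T\,\}\bigr|\;\leq\;O(h)\,|S|,
\]
hence $\sum_{i,j}|\widetilde C_{ij}|=O(h\,|S|)$, and combining with the first fact $\sum_{i,j}|C_{ij}|\leq 2\sum_{i,j}|\widetilde C_{ij}|=O(h\,|S|)$.

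Putting everything together, the loop runs in $O\bigl(h\,|S|\log^5\tfrac1\mu\,\mathrm{polyloglog}\,\tfrac1\mu\bigr)$ time, and the final union on Line~\ref{alg1:get-S} merges $O(h^2)$ sorted lists of total length $O(h\,|S|)$ in $O(h\,|S|\log h)$ time, which is absorbed; recalling $S=A\oplus_\mu B$ completes the proof. I expect the crux to be the second fact together with the size comparison $|C_{ij}|=O(|\widetilde C_{ij}|)$: although the loop runs over $h^2$ pairs, each $\widetilde C_{ij}$ sits in only $O(1)$ magnitude levels pinned to $\max(i,j)$, so the $\widetilde C_{ij}$ collectively charge only $O(h)$ — not $O(h^2)$ — copies to a given element of $S$, and $|C_{ij}|=O(|\widetilde C_{ij}|)$ is exactly what transfers this saving to the sparse-FFT cost, which is linear in $|C_{ij}|$ rather than in $|\widetilde C_{ij}|$.
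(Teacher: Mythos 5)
Your proof is correct and follows essentially the same route as the paper's: both compute each $C_{ij}$ via Sparse FFT in $O(|C_{ij}|\log^5\tfrac1\mu\,\mathrm{polyloglog}\,\tfrac1\mu)$ time after scaling by $2^{\max(i,j)}$, then reduce the claim to $\sum_{i,j}|C_{ij}|=O(h|S|)$, established by the same two observations that $|C_{ij}|\leq 2|\widetilde C_{ij}|$ (the rounding at level $\max(i,j)+1$ collapses at most two elements into one) and that each $s\in S$ at level $L$ can lie in only $O(L)=O(h)$ of the sets $\widetilde C_{ij}$, namely those with $\max(i,j)\in\{L-1,L\}$. Your explicit tracking of which magnitude levels $\widetilde C_{ij}$ occupies and the aside about shifting by $1$ before invoking Lemma~\ref{lem:sparse-fft} are cosmetic refinements that do not change the argument.
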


\begin{proof}
    Let $S=A\oplus_\mu B$. For any $i,j\in [0,h-1]$, suppose $i>j$. Since $A_i/2^i\subseteq [\frac{1}{\mu},\frac{2}{\mu})$ and $\widetilde{B}_j/2^i\subseteq [0,\frac{1}{\mu})$, the running time to compute $C_{ij}$ is $O(|C_{ij}|\log^5\frac{1}{\mu}\mathrm{polyloglog}\frac{1}{\mu})$. Then it is easy to see that the running time of Algorithm~\ref{alg:approximate-sumset} is $O(|A|+|B|+\sum_{i=0}^{h-1}\sum_{j=0}^{h-1}|C_{ij}|\log^5\frac{1}{\mu}\mathrm{polyloglog}\frac{1}{\mu})$
    Since $S=A\cup B\cup \bigcup_{j=0}^{h-1}\bigcup_{i=0}^{h-1} (\widetilde{C}_{i,j})$, we have $|S|\geq |A|+|B|$. It suffices to show that $\sum_{i=0}^{h-1}\sum_{j=0}^{h-1}|C_{ij}|=O(|S|\cdot h)$.

    Suppose $i>j$. Since $A_i\subseteq 2^i\cdot[\frac{1}{\mu},\frac{2}{\mu})$ and $\widetilde{B}_j\subseteq 2^i\cdot [0,\frac{1}{\mu})$, we have $C_{ij}\subseteq 2^i\cdot [\frac{1}{\mu},\frac{4}{\mu})$. When we round $C_{ij}$ to $\widetilde{C}_{ij}$, we just need to round $C_{ij}\cap 2^i\cdot [\frac{2}{\mu},\frac{4}{\mu})$ down to multiples of $2^{i+1}$. Let $D=C_{ij}/2^i\cap [\frac{2}{\mu},\frac{4}{\mu})$. It is equal to rounding $D$ down to multiples of $2$, which loses at most half of its elements. So we have $|\widetilde{C}_{ij}|\geq\frac{1}{2}|C_{ij}|$. 
    
    For any $s\in S$, if $s< \frac{1}{\mu}$, $s$ must in $A\cup B$ or $\tilde{C}_{00}$. So we focus on $s\in2^{i^*}\cdot [\frac{1}{\mu},\frac{2}{\mu})$. Suppose $s=a+b$. Then at least one of $a$ and $b$ is not less than $2^{{i^*}-1}/\mu$. So there are at most $4{i^*}$ sets containing $s$: $\{\widetilde{C}_{({i^*}-1)j}\}_{j=0}^{{i^*}-1}$, $\{\widetilde{C}_{{i^*}j}\}_{j=0}^{{i^*}}$, $\{\widetilde{C}_{j({i^*}-1)}\}_{j=0}^{{i^*}}$, and $\{\widetilde{C}_{j{i^*}}\}_{j=0}^{{i^*}}$. Therefore, $|S|\cdot 4h\geq \sum_{i=0}^{h-1}\sum_{j=0}^{h-1}\widetilde{C}_{ji}$. The total running time is $O(|A\oplus_\mu B|\cdot h\log^5\frac{1}{\mu}\mathrm{polyloglog}\frac{1}{\mu})$.
\end{proof}

It is easy to see that the sumset of two complete $\mu$-canonical sets is also complete $\mu$-canonical.
\begin{observation}\label{obs:complete-mu-canonical-sumset}
    Let $\mu\in(0,1)$ and $A$, $B$ be two complete $\mu$-canonical sets. Then $A\oplus_\mu B$ is complete $\mu$-canonical.
\end{observation}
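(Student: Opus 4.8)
The plan is to trace through Algorithm~\ref{alg:approximate-sumset} and check that the ``non-emptiness at every magnitude'' property is preserved. Let $A$ and $B$ be complete $\mu$-canonical, with $\max(A\cup B)\in[\frac{2^{h-1}}{\mu},\frac{2^h}{\mu})$ (so the algorithm uses this $h$), and set $S=A\oplus_\mu B$. By Lemma~\ref{lem:mu-approximate-sumset} we already know $S$ is $\mu$-canonical; what remains is to exhibit, for every magnitude level $i$ from $1$ up to the top level of $\max(S)$, an element of $S$ in $[\frac{2^i}{\mu},\frac{2^{i+1}}{\mu})$. First I would observe that $\max(S)\ge \max(A)+\max(B)$ (since $A_{h-1}+B_{h-1}\subseteq C_{h-1,h-1}$, which after the rounding of Line~\ref{alg1:round-C} still contains an approximation of $\max(A)+\max(B)$ up to factor $1-\mu$), and conversely $S\subseteq[0,\frac{2^{h+1}}{\mu})$, so the top level of $\max(S)$ is either $h-1$ or $h$.

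For the lower levels $i\in[1,h-1]$: since $A\subseteq S$ and $A$ is complete $\mu$-canonical, $A\cap[\frac{2^i}{\mu},\frac{2^{i+1}}{\mu})\neq\emptyset$ gives immediately $S\cap[\frac{2^i}{\mu},\frac{2^{i+1}}{\mu})\neq\emptyset$. So the only level that needs an argument is the top level $i=h$, which is nonempty only when $\max(A)+\max(B)$ (approximately) reaches $\frac{2^h}{\mu}$. Here I would pick $a^*\in A_{h-1}$ and $b^*\in B_{h-1}$, so $a^*,b^*\in[\frac{2^{h-1}}{\mu},\frac{2^h}{\mu})$ and hence $a^*+b^*\in[\frac{2^h}{\mu},\frac{2^{h+1}}{\mu})$. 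This sum lies in $C_{h-1,h-1}=2^{h-1}\cdot(A_{h-1}/2^{h-1}+B_{h-1}/2^{h-1})$ (the $i=j$ branch does no rounding of the inputs), so $C_{h-1,h-1}\cap[\frac{2^h}{\mu},\frac{2^{h+1}}{\mu})\neq\emptyset$. Then I would check that rounding $C_{h-1,h-1}$ to the $\mu$-canonical set $\widetilde{C}_{h-1,h-1}$ via Lemma~\ref{lem:round-to-mu-canonical} does not destroy this: that lemma's rounding maps $[\frac{2^h}{\mu},\frac{2^{h+1}}{\mu})$ into itself (it rounds down to multiples of $2^h$, and $\frac{2^h}{\mu}$ is such a multiple), so $\widetilde{C}_{h-1,h-1}\cap[\frac{2^h}{\mu},\frac{2^{h+1}}{\mu})\neq\emptyset$, and since $\widetilde{C}_{h-1,h-1}\subseteq S$ we are done with level $h$.

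The main obstacle — really the only subtlety — is making sure the rounding steps (Lines~\ref{alg1:round-B}, \ref{alg1:round-A}, \ref{alg1:round-C}) cannot push the single witnessing element out of its magnitude band. For the input-rounding lines this is a non-issue for the top level because the $i=j=h-1$ case rounds neither $A_{h-1}$ nor $B_{h-1}$. For the output-rounding line one must use the precise form of Lemma~\ref{lem:round-to-mu-canonical}: within a band $[\frac{2^i}{\mu},\frac{2^{i+1}}{\mu})$ it rounds down to multiples of $2^i$, and since $\frac{2^i}{\mu}$ is itself a multiple of $2^i$ (here $\mu$ may be assumed to be such that $\frac1\mu$ is a positive integer, as arranged in Section~\ref{sec:reduce}; more generally one rounds so that band boundaries are preserved), the rounded value stays in the same band rather than dropping into the band below. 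Once this is noted, the rest is the bookkeeping above, and the observation follows.
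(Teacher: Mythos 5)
The paper gives no written proof of this observation (it just says ``it is easy to see\ldots''), so the comparison is to what the statement actually requires. Your argument has a genuine gap in the top-band step, and it also contains a sign error, but the gap can be patched by a much shorter observation.

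The gap: you pick $a^*\in A_{h-1}$ and $b^*\in B_{h-1}$, but this presupposes that \emph{both} $A$ and $B$ have elements in the top band $[\frac{2^{h-1}}{\mu},\frac{2^{h}}{\mu})$ of $A\cup B$. Only $\max(A\cup B)$ is guaranteed to land there; the other set's maximum may lie in a strictly lower band, in which case one of $A_{h-1}$, $B_{h-1}$ is empty and your witness $a^*+b^*$ does not exist. Yet in exactly such a case $\max(A)+\max(B)$ can still cross into band $h$ (e.g.\ $\max(A)=\frac{1.5\cdot 2^{h-1}}{\mu}$ and $\max(B)=\frac{0.8\cdot 2^{h-1}}{\mu}$), so the case you say ``needs an argument'' can arise while your construction is unavailable. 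Separately, the claim $\max(S)\geq \max(A)+\max(B)$ is backwards: every rounding in Algorithm~\ref{alg:approximate-sumset} rounds \emph{down}, so $\max(S)\leq\max(A)+\max(B)$, with equality not guaranteed.

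The repair is that the top band never needs a constructive witness at all: by Definition of complete $\mu$-canonical, non-emptiness is required only for bands $1,\ldots,g$ where $\max(S)\in[\frac{2^{g}}{\mu},\frac{2^{g+1}}{\mu})$, and the band $g$ is non-empty trivially because $\max(S)$ itself lies in it. So the whole proof reduces to: (i) $A\cup B\subseteq S$, and whichever of $A,B$ has its maximum in band $h-1$ is complete $\mu$-canonical with top band $h-1$, hence covers bands $1,\ldots,h-1$ of $S$; (ii) $\max(S)\leq\max(A)+\max(B)<\frac{2^{h+1}}{\mu}$, so the top band $g$ of $S$ is at most $h$; (iii) band $g$ is covered by $\max(S)$. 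This avoids any reference to $C_{ij}$, to the rounding in Lines~\ref{alg1:round-B}--\ref{alg1:round-C}, and to the divisibility of $\frac{1}{\mu}$ by $2^{i}$, none of which are actually needed. Your attention to whether rounding can push a witness across a band boundary is a reasonable instinct, but it is solving a sub-problem the statement does not pose.
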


\begin{lemma}\label{lem:mu-approximate-sumset-recover}
    Let $\mu\in(0,1)$ and $A$, $B$ be two $\mu$-canonical sets. For any $s\in A\oplus_\mu B$, in $O(|A|\log|A|+|B|\log |B|)$-time, we can find $a\in A\cup\{0\}$ and $b\in B\cup\{0\}$ such that $(1-2\mu)(a+b)\leq s \leq a+b$.
\end{lemma}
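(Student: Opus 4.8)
The plan is to reduce the recovery task to sorting the two sets and performing one interval search per element of one of them. First I would note that, by the definition of $A\oplus_\mu B$ in Algorithm~\ref{alg:approximate-sumset}, every $s\in A\oplus_\mu B$ lies in $A$, or in $B$, or in some $\widetilde{C}_{ij}$. Unwinding the case analysis already carried out in the proof of Lemma~\ref{lem:mu-approximate-sumset}, in each of these cases I can name a witness pair $(a^*,b^*)$ with $a^*\in A\cup\{0\}$, $b^*\in B\cup\{0\}$ and $(1-\mu)^2(a^*+b^*)\leq s\leq a^*+b^*$: for $s\in A$ take $(s,0)$, for $s\in B$ take $(0,s)$, and for $s\in\widetilde{C}_{ij}$ take $a^*\in A_i\subseteq A$ and $b^*\in B_j\subseteq B$ (or both summands from the diagonal block), where one round-down and one canonicalization each cost a factor $1-\mu$. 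Since $1-2\mu\leq(1-\mu)^2$, this gives in particular $s\leq a^*+b^*\leq \frac{s}{1-2\mu}$. (If $\mu\geq\frac12$ the inequality $(1-2\mu)(a+b)\leq s$ is automatic and only $s\leq a+b$ need be ensured; the argument below applies verbatim with the upper endpoint dropped.)

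Given this, the algorithm is: sort $A\cup\{0\}$ and $B\cup\{0\}$, and then for each $a\in A\cup\{0\}$ binary-search the sorted copy of $B\cup\{0\}$ for an element $b\in[\max(0,\,s-a),\,\frac{s}{1-2\mu}-a]$, returning the first pair $(a,b)$ found. Termination is guaranteed because, when the loop reaches $a=a^*$, the witness $b^*$ lies in the queried interval (from $a^*+b^*\geq s$, $b^*\geq 0$, and $a^*+b^*\leq\frac{s}{1-2\mu}$). The output is correct because any returned $(a,b)$ satisfies $a+b\geq s$, hence $s\leq a+b$, and $a+b\leq\frac{s}{1-2\mu}$, hence $(1-2\mu)(a+b)\leq s$, which is exactly the required guarantee. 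The running time is $O(|A|\log|A|+|B|\log|B|)$ for the sort plus $O(|A|\log|B|)$ for the searches, and the latter is dominated by the former.

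The only point that needs care is the bookkeeping in the first step: one must confirm that the element of $A\oplus_\mu B$ can always be written as $a+b$ with $a$ drawn from $A$ itself and $b$ from $B$ itself --- not from the rounded auxiliary sets $\widetilde{A}_i,\widetilde{B}_j$ nor from the intermediate sumsets $C_{ij}$ --- together with the two-sided estimate $(1-\mu)^2(a+b)\leq s\leq a+b$. This is precisely what the proof of Lemma~\ref{lem:mu-approximate-sumset} already yields: each round-down forming $\widetilde{B}_j$ or $\widetilde{A}_i$ is reversed by substituting back the original summand, which raises the value from $c$ to $a+b$ by at most $2^i\leq\mu(a+b)$, and the canonicalization of $C_{ij}$ to $\widetilde{C}_{ij}$ only decreases the value, so $s\leq c\leq a+b$ throughout. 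Once this is verified, the remaining steps are routine sorting and binary search.
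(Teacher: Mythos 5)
Your proof is correct, and it takes a genuinely different route from the paper's. The paper's proof unwinds the internal case structure of Algorithm~\ref{alg:approximate-sumset}: it locates the layer $i^*$ that contains $s$, enumerates the four rounding scenarios that could have produced $s$ (which restrict $a$ to $A_{i^*-1}\cup A_{i^*}$ or $b$ to $B_{i^*-1}\cup B_{i^*}$), and binary-searches the sorted partner set for the element that reproduces the exact floored expression. Your proof instead treats Lemma~\ref{lem:mu-approximate-sumset} as a black box guaranteeing the mere \emph{existence} of a witness pair $(a^*,b^*)\in (A\cup\{0\})\times(B\cup\{0\})$ with $s\leq a^*+b^*\leq\frac{s}{1-2\mu}$, and then observes that \emph{any} pair whose sum falls in $[s,\frac{s}{1-2\mu}]$ already satisfies the lemma's conclusion, so a brute-force interval search over all $a\in A\cup\{0\}$ suffices. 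This decoupling is a real simplification: it does not depend on the precise floor/layer bookkeeping of the algorithm, it would survive changes to how $A\oplus_\mu B$ is implemented as long as the $1-2\mu$ approximation guarantee holds, and the running-time bound $O(|A|\log|B|)\leq O(|A|\log|A|+|B|\log|B|)$ for the search phase is correct since whichever of $|A|,|B|$ is larger makes the dominating term absorb it. The paper's version buys a slightly more structured search (only two layers of each set need be scanned, the rest being pruned a priori), but asymptotically the two approaches are identical. You also handle the degenerate case $\mu\geq\tfrac12$ cleanly by dropping the vacuous upper bound; the paper implicitly avoids this since in its use $\mu$ is small, but your remark makes the lemma self-contained as stated.
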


\begin{proof}
    We first check if $s\in A$ or $s\in B$. If yes, we have done it since we can make the other one $0$. Otherwise, there is some $i,j$ such that $s\in \widetilde{C}_{ij}$. Suppose $s\in2^{i^*}\cdot [\frac{1}{\mu},\frac{2}{\mu})$ (when $s<\frac{1}{\mu}$, let $i^* = 0$). By Algorithm~\ref{alg:approximate-sumset}, there must be $a\in A$ and $b\in B$ such that at least one of the following holds.
    \begin{itemize}
        \item $a\in A_{i^*-1}$, $b\in B_j$ such that $j\leq i^*-1$ and $a+2^{i^*-1}\cdot \lfloor b/2^{i^*-1}\rfloor \in \{s,s+2^{i^*-1}\}$.
        \item $a\in A_{i^*}$, $b\in B_j$ such that $j\leq i^*$ and $a+2^{i^*}\cdot \lfloor b/2^{i^*}\rfloor =s$.
        \item $b\in B_{i^*-1}$, $a\in A_j$ such that $j\leq i^*-1$ and $2^{i^*-1}\cdot \lfloor a/2^{i^*-1}\rfloor+b \in \{s,s+2^{i^*-1}\}$.
        \item $b\in B_{i^*}$, $a\in A_j$ such that $j\leq i^*$ and $2^{i^*}\cdot \lfloor a/2^{i^*}\rfloor+b =s$.
    \end{itemize}
    Then after sorting $A$ and $B$, for any $a\in A_{i^*-1}\cup A_{i^*}$ and $b\in B_{i^*-1}\cup B_{i^*}$, we can check if it holds by binary search. So we can find such $a$ and $b$ in $O(|A|\log |A|+|B|\log |B|)$-time.

    It is easy to check that $s\leq a+b$ and $s\geq a+b-2^{i^*}$. Since $s\geq \frac{2^{i^*}}{\mu}$, we have $(1-2\mu)(a+b)\leq s$.
\end{proof}

\section{Approximating Reduced Problem}\label{sec:alg}
We start from the first level, where $A_i=\{x_i\}$. Then $\mathcal{S}_X=A_1\oplus\cdots\oplus A_{n}$. We compute it by a tree structure. That is, we first compute $A_1\oplus A_2,A_3\oplus A_4\ldots$, and then compute $(A_1\oplus A_2)\oplus (A_3\oplus A_4)\ldots$, until we reach the root.
Since $X\subseteq [\frac{1}{\mu},\frac{2}{\mu})$, $A_i$s are complete $\mu$-canonical. 
We can approximate each node using Algorithm~\ref{alg:approximate-sumset}. 
Consider level $h\in[1,\lceil\log(n)\rceil]$ and let $A_1,\ldots,A_{2\ell}$ be sets in level $h-1$ that we have computed (We can add an empty set to make the number even). Let $B_1, \ldots, B_{\ell}$ be the sets in this level. That is $B_i=A_{2i-1}\oplus_\mu A_{2i}$. We have $\ell =\lceil\frac{n}{2^h}\rceil \leq  \frac{n}{2^{h-1}}$, $A_i\subseteq[\frac{1}{\mu},\frac{2^h}{\mu})$ for all $i
\in[1,2\ell]$ and $A_i\subseteq[\frac{1}{\mu},\frac{2^{h+1}}{\mu})$ for all $i
\in[1,\ell]$.

In the first level, we have $\sum \max(A_i)=\Sigma(X)\geq \frac{4m}{\mu}$. In each level, $B_i$ approximate $A_{2i-1}\oplus A_{2i}$ with factor $1-2\mu$ for any $i$, so $\sum_{i=1}^\ell \max(B_i)\geq (1-2\mu)\sum_{i=1}^{2\ell} \max(A_i)\geq (1-2\mu\log n)\frac{4m}{\mu}$. We can assume $ \frac{1}{\mu}>8\log n$, since otherwise, we can compute $\mathcal{S}_X$ directly in $O(n\log ^2n)$ time. Then we have We have $\sum_{i=1}^\ell \max(B_i)\geq \frac{3m}{\mu}$ in each level. 

We can also assume $\ell\geq 24h$, since otherwise, we can compute $B_i=A_{2i-1}\oplus_\mu A_{2i}$ for all $i\in[1,\ell]$ in $O(\frac{1}{\mu}\cdot \log^3 n\log^5\frac{1}{\mu}\mathrm{polyloglog}\,\frac{1}{\mu})$ time via Lemma~\ref{lem:time-mu-approximate-sumset}, according that $|B_i|\leq \frac{h}{\mu}$ for all $i$.

We first show that if the total size of $\{B_i\}$ is large, then $B_1\oplus\cdots\oplus B_\ell $ contains a long sequence with small consecutive differences. 

\begin{lemma}\label{lem:dense-sequence}
    Let $B_1,\ldots, B_\ell$ be subsets of $[\frac{1}{\mu},\frac{2^{h+1}}{\mu})$ such that $\sum_{i=1}^\ell\max(B_i)\geq \frac{3m}{\mu}$. Suppose for each $1\leq i\leq \ell$, $B_i$ is complete $\mu$-canonical. If $\sum_{i=1}^\ell |B_i|\geq \frac{128cnh}{m\mu}\log\frac{1}{\mu}$, then $B_1\oplus\cdots\oplus B_\ell$ has a sequence $z_1<\cdots<z_L$ such that $z_1\leq \frac{m}{\mu}$, $z_L\geq \frac{2m}{\mu}$ and $z_i-z_{i-1}\leq m$ for $i\in[2,L]$.
\end{lemma}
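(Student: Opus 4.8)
The plan is to find, inside one magnitude band, a sub-family of the $B_i$'s whose band-restrictions are simultaneously numerous and individually large, feed those to the Szemer\'edi--Vu bound (Corollary~\ref{coro:ap}) to obtain a long arithmetic progression with small common difference, and then pad it out to all of $[\frac m\mu,\frac{2m}\mu]$ with the remaining sets via Lemma~\ref{lem:ap-extend}. Concretely, I would split each $B_i$ into its magnitude bands $B_i^{(j)}:=B_i\cap[\frac{2^j}\mu,\frac{2^{j+1}}\mu)$ for $j\in\{0,1,\dots,h\}$ (there are only $h+1$ of them since $B_i\subseteq[\frac1\mu,\frac{2^{h+1}}\mu)$); because $B_i$ is $\mu$-canonical, $B_i^{(j)}/2^{j}\subseteq[\frac1\mu,\frac2\mu)\subseteq[1,\frac2\mu)$, so after dividing every band-restriction by its own $2^{j}$ they all live in the single interval $[1,\frac2\mu)$ and their largest elements are bounded by $\frac2\mu$. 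Since $\sum_i|B_i|\geq\frac{128cnh}{m\mu}\log\frac1\mu$ and there are $h+1$ bands, pigeonhole gives a band $j^*$ with $\sum_i|B_i^{(j^*)}|\geq\frac{64cn}{m\mu}\log\frac1\mu$ (using $\frac{h}{h+1}\geq\frac12$). Applying Lemma~\ref{lem:sum-of-size-bound} to the sizes $\{|B_i^{(j^*)}|\}$ with the bound $\frac2\mu$ (so $\log\frac2\mu\leq 2\log\frac1\mu$, valid since $\frac1\mu>8\log n\geq 2$) yields a threshold $k$ and a family $\mathcal{C}=\{i:|B_i^{(j^*)}|\geq k\}$ with $|\mathcal{C}|\cdot k=\Omega\!\big(\frac{cn}{m\mu}\big)$.

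Next I would pick $I\subseteq\mathcal{C}$, keeping the members of $\mathcal{C}$ with the smallest values $\max(B_i^{(j^*)})$, just large enough that $|I|\cdot k\geq c\,u'$ for a parameter $u'$ fixed below; by the averaging over $\mathcal{C}$ this keeps $\sum_{i\in I}\max(B_i^{(j^*)})$ small. By Corollary~\ref{coro:ap}, $\bigoplus_{i\in I}\big(B_i^{(j^*)}/2^{j^*}\big)$ contains an arithmetic progression of length at least $u'$, and multiplying back by $2^{j^*}$ gives an arithmetic progression $P$ of length at least $u'$ inside $\bigoplus_{i\in I}B_i^{(j^*)}\subseteq B_1\oplus\cdots\oplus B_\ell$, with common difference $D<\frac{2^{j^*+1}|I|}{\mu(u'-1)}$. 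Since $|I|\leq\ell\leq\frac{n}{2^{h-1}}$ and $j^*\leq h$ we have $2^{j^*+1}|I|\leq 4n$, so $D<\frac{4n}{\mu(u'-1)}$; choosing $u'\geq\frac{4n}{m\mu}+1$ forces $D\leq m$. I would at the same time take $u'$ large enough that $\mathrm{range}(P)=(u'-1)D$ exceeds $\frac{2^{h+1}}\mu$ (an upper bound on every element appearing in any $B_i$), and choose $I$ small enough that $\sum_{i\in I}\max(B_i)\leq\frac m\mu$ — here one uses that the hypothesis together with $|B_i|\leq\frac{h+1}\mu$ and $\ell\leq\frac n{2^{h-1}}$ forces $m\geq 32c\,2^{h}\log\frac1\mu$, so the $\max$-values in play are much smaller than $\frac m\mu$. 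Then $P$ is an arithmetic progression in $B_1\oplus\cdots\oplus B_\ell$ starting at some $\alpha\leq\frac m\mu$, with gaps $D\leq m$ and $\mathrm{range}(P)>\frac{2^{h+1}}\mu$.

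Finally, $B_1\oplus\cdots\oplus B_\ell\supseteq P\oplus\bigoplus_{i\notin I}B_i$, and I would extend $P$ by the sets $\{B_i:i\notin I\}$ one at a time in increasing order of $\max(B_i)$, applying Lemma~\ref{lem:ap-extend} at each step: the hypothesis $\mathrm{range}\geq\max(B_i)$ holds throughout since the running range never drops below $\mathrm{range}(P)>\frac{2^{h+1}}\mu>\max(B_i)$. The result is a sequence with consecutive gaps $\leq m$ running from $\alpha\leq\frac m\mu$ up to $\alpha+\mathrm{range}(P)+\sum_{i\notin I}\max(B_i)$, and this endpoint is $\geq\frac{2m}\mu$ because $\sum_i\max(B_i)\geq\frac{3m}\mu$ while $\sum_{i\in I}\max(B_i)\leq\frac m\mu$. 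This is the required sequence $z_1<\cdots<z_L$.

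The main obstacle is the coordination in the second paragraph: one must make a single choice of band $j^*$, threshold $k$, length parameter $u'$, and sub-family $I$ that simultaneously guarantees (a) $|I|\cdot k\geq c\,u'$ with $u'\geq\frac{4n}{m\mu}$, so the common difference is $\leq m$; (b) $u'$ (hence $\mathrm{range}(P)$) large enough to clear $\frac{2^{h+1}}\mu$; and (c) $\sum_{i\in I}\max(B_i)\leq\frac m\mu$, so $P$ starts inside the window. These constraints pull against each other, and reconciling them is exactly where the quantitative relations among $n,m,h,\mu$ — the context bounds $\ell\leq n/2^{h-1}$, $\frac1\mu>8\log n$, $\sum_i\max(B_i)\geq\frac{3m}\mu$, and the lower bound on $m$ forced by the size hypothesis — have to be used with care; it may be cleanest either to distinguish the regimes $n\leq m^2$ and $n>m^2$, or to build the dense sequence band by band (merging per-band progressions by Minkowski addition and keeping the largest per-band gap) rather than from a single band.
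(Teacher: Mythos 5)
Your outline tracks the paper's proof closely for the first two steps --- band decomposition $B_i^{(j)}$, pigeonhole to a heavy band $j^*$, rescaling $C_i = B_i^{(j^*)}/2^{j^*} \subseteq [\frac{1}{\mu},\frac{2}{\mu})$, Lemma~\ref{lem:sum-of-size-bound} to extract a threshold $k$ and a dense subfamily, and a greedy choice of $I$ to control $\sum_{i\in I}\max(B_i)$. The difference, and the place where the argument does not close, is how you propose to get from the Szemer\'edi--Vu arithmetic progression to a sequence that actually spans $[\frac{m}{\mu},\frac{2m}{\mu}]$.

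You try to pick a single $u'$ so that the AP $P$ simultaneously has (a) common difference $\leq m$, (b) range $(u'-1)D > \frac{2^{h+1}}{\mu}$, and (c) $\sum_{i\in I}\max(B_i) \leq \frac{m}{\mu}$, and you correctly note these pull against each other. They do not in fact coexist in general. The only unconditional lower bound on the range is $u'-1$ (the common difference could be $1$), so (b) forces $u'-1 > \frac{2^{h+1}}{\mu}$ and hence $|I|k \geq cu' > c\cdot\frac{2^{h+1}}{\mu}$. But (c), together with $\max(B_i^{(j^*)}) \geq \frac{2^{j^*}}{\mu}$, forces $|I| \leq \frac{m}{2^{j^*}}$, and $k \leq \frac{1}{\mu}$, so $|I|k \leq \frac{m}{2^{j^*}\mu}$. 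Combining, you would need $m > c\,2^{j^*+h+1}$. The hypotheses only give $m \geq 32c\,2^h\log\frac{1}{\mu}$ (your own computation), which is far weaker when $j^*$ is close to $h$ and $h$ is large (e.g., $h \approx \log n$). So the ``take $u'$ large'' route is not viable.

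The paper instead takes the \emph{opposite} extreme: $u' = \frac{4}{\mu}$, so $|I| = \lceil\frac{4c}{k\mu}\rceil$ is as small as possible and the initial AP has range only $\approx \frac{2^{j^*+2}}{\mu}$, with common difference $\leq \frac{m}{2^{h+2-j^*}}$ and starting point $\leq \frac{m}{2^{h-j^*}\mu}$. It then \emph{bootstraps} the range band by band: at each of $h-j^*-1$ rounds it selects two unused $B_i$'s (smallest max first) and, crucially, uses only the nonempty band restriction $B_i\cap[\frac{2^{j^*+t+1}}{\mu},\frac{2^{j^*+t+2}}{\mu})$ --- guaranteed by \emph{complete} $\mu$-canonicity --- so that Lemma~\ref{lem:ap-extend}'s hypothesis $\max(A) \leq$ range holds and the range roughly doubles per round. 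Only after the range exceeds $\frac{2^{h+1}}{\mu}$ does it extend with the full remaining sets. The bounds $\ell \leq \frac{n}{2^{h-1}}$ and $\ell \geq 24h$ are then used to ensure $|I| + 2(h-j^*-1) \leq \frac{\ell}{3}$, leaving two-thirds of the sets (hence $\geq\frac{2}{3}\cdot\frac{3m}{\mu} = \frac{2m}{\mu}$ of total max) for the final stretch.

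Your closing remark that the fix might be to ``build the dense sequence band by band'' is exactly the missing step; the paper's implementation of it is the doubling/bootstrapping scheme above, and it makes essential use of completeness of the $\mu$-canonical sets, which your draft never invokes. As written, your proposal has a genuine gap at the transition from the AP to the extended sequence, and you should not present the two regimes $n\leq m^2$ vs.\ $n>m^2$ as a likely repair --- the obstruction is governed by $j^*$ and $h$, not by the relation between $n$ and $m$.
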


\begin{proof}
    Let $B_i^j=B_i\cap [\frac{2^j}{\mu},\frac{2^{j+1}}{\mu})$ for all $j\in[0,h]$. We have $B_i^j\subseteq 2^j\cdot[\frac{1}{\mu},\frac{2}{\mu})$ and $B_i^j\neq\emptyset$ for all $j\in[0,h]$. If $\sum_{i=1}^\ell |B_i|\geq \frac{128cnh}{m\mu}\log\frac{1}{\mu}\geq \frac{64cn(h+1)}{m\mu}\log\frac{1}{\mu}$, there exists some $j^*\in [0,h]$ such that
    \[
    \sum_{i=1}^\ell |B_i^{j^*}|\geq \frac{64cn}{m\mu}\log\frac{1}{\mu}.
    \]
    Now consider one such  ${j^*}$. Since $B_i^{j^*}\subseteq 2^{j^*}\cdot[\frac{1}{\mu},\frac{2}{\mu})$, let $C_i=B_i^{j^*}/2^{j^*}$ for all $i\in [1,\ell]$. Then $\sum_{i=1}^\ell|C_i|\geq \frac{64cn}{m\mu}\log\frac{1}{\mu}$ and $|C_i|\in[1,\frac{1}{\mu}]$. By Lemma~\ref{lem:sum-of-size-bound}, there exists a $k\in [1,\frac{1}{\mu}]$ such that at least $\frac{32cn}{km\mu}$ sets from $\{C_1,\ldots,C_\ell\}$ have size at least $k$ (Also, we have $\frac{32cn}{km\mu}\leq \ell$). We select $\lceil \frac{4c}{k\mu}\rceil $ such $C_i$s greedily with smallest $\max (B_i)$, which will be used later. Let $I$ be the index of the selected $C_i$s. We first see that 
    \begin{align*} 
    |I|=\lceil \frac{4c}{k\mu}\rceil\leq \frac{8c}{k\mu}\leq \frac{\ell}{4}\cdot \frac{m}{n}.
    \end{align*}
    The last inequality is due to that $\ell\geq \frac{32cn}{km\mu}$. 
    Let $u'=\frac{4}{\mu}$. We have $|I|k=\lceil \frac{4c}{k\mu}\rceil k\geq cu' $. Then $\{C_i\}_{i\in I}$ satisfies the condition of Corollary~\ref{coro:ap}, and hence, $\oplus_{i\in I}C_i$ contains an arithmetic progression $\{a_i,\ldots, a_L\}$ of length at least $\frac{4}{\mu}$. We have
    \begin{align*}    
    a_1\leq a_L&\leq \sum_{i\in I}\max(C_i)\leq |I|\cdot \frac{2}{\mu}\leq \frac{\ell}{4}\cdot \frac{m}{n}\cdot \frac{2}{\mu}\leq\frac{m}{2^{h}\mu}.\\
    a_L-a_i&\geq \frac{4}{\mu}.\\
    \Delta &\leq a_L/\frac{4}{\mu}\leq \frac{m}{2^{h+2}}.
    \end{align*}
    Since $2^j\cdot(\oplus_{i\in I}C_i)=\oplus_{i\in I}B_i^j\subseteq \oplus_{i\in I} B_i$, $\oplus_{i\in I}B_i$ contains an arithmetic progression $\{a_1,\ldots,a_L\}$ such that
    \begin{align*} 
    a_1&\leq\frac{m}{2^{h-{j^*}}\mu}.\\
    a_L-a_i&\geq \frac{2^{{j^*}+2}}{\mu}.\\
    \Delta &\leq a_L/\frac{4}{\mu}\leq \frac{m}{2^{h+2-{j^*}}}.
    \end{align*}
    Now we extend the arithmetic progression to a long sequence. Each time, we select $2$ sets from $\{B_i\}_{i\notin I}$ with smallest $\max(B_i)$. Since $B_i$ is complete $\mu$-canonical, $B_i\cap [\frac{2^{{j^*}+1}}{\mu},\frac{2^{{j^*}+2}}{\mu})\neq \emptyset$. By Lemma~\ref{lem:ap-extend}, we can extend the sequence to $\{s_1,\ldots,s_L\}$ such that
    \begin{align*} 
    s_1&\leq\frac{m}{2^{h-{j^*}}\mu}.\\
    s_L-s_i&\geq \frac{2^{{j^*}+2}}{\mu}+2\frac{2^{{j^*}+1}}{\mu}=\frac{2^{{j^*}+3}}{\mu}.\\
    s_i-s_{i-1} &\leq \frac{m}{2^{h+2-{j^*}}}.
    \end{align*}
    After $h-{j^*}-1$ times, we have $s_L-s_1\geq \frac{2^{h+1}}{\mu}$. Now we can use the remaining sets to extend. Since $|I|\leq \frac{\ell}{4}\cdot\frac{m}{n}\leq \frac{\ell}{4}$ and $\ell \geq 24h$, there are at least $\frac{2}{3}$ fraction of the sets not used. Since we always select the set with the smallest $\max(B_i)$, we can extend the sequence such that
    \begin{align*}
        s_1&\leq \frac{m}{\mu},\\
        s_L&\geq \frac{2}{3}\sum_{i=1}^\ell\max(B_i)\geq \frac{2m}{\mu},\\
        s_i-s_{i-1} &\leq m.
    \end{align*}
    Now we get a sequence long enough.
\end{proof}

If the total size of $\{B_i\}$ is large enough, we don't need to compute all of them.

\begin{lemma}\label{lem:compute-a-level}
    Given $\mu$-canonical sets $A_1,\ldots,A_{2\ell}\subseteq [\frac{1}{\mu},\frac{2^{h}}{\mu})$, Let $B_i=A_{2i-1}\oplus_\mu A_{2i}$ for $1\leq i\leq \ell$. In $O(\frac{nh^2}{m\mu}\log^6\frac{1}{\mu}\mathrm{polyloglog}\,\frac{1}{\mu})$-time, we can 
    \begin{itemize}
        \item either compute $B_1,\ldots,B_{\ell}$, or
        \item return a subset $I$ of $[1,\ell]$ such that $\frac{128cnh}{m\mu}\log\frac{1}{\mu}\leq \sum_{i\in I}|B_i|< \frac{128cnh}{m\mu}\log\frac{1}{\mu}+\frac{h+1}{\mu}$.
    \end{itemize}
\end{lemma}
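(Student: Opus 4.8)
The plan is to compute the sets $B_1, B_2, \ldots, B_\ell$ one at a time in this order, maintain the running total $\sigma_j := \sum_{i=1}^{j}|B_i|$, and stop as soon as $\sigma_j$ reaches the threshold $T := \frac{128cnh}{m\mu}\log\frac1\mu$. Concretely, for $j=1,2,\ldots$ I would compute $B_j = A_{2j-1}\oplus_\mu A_{2j}$ via Algorithm~\ref{alg:approximate-sumset}, add $|B_j|$ to the running total, and test whether it has reached $T$. If the running total never reaches $T$ by the time $j=\ell$, then all of $B_1,\ldots,B_\ell$ have been computed and we output them (the first bullet); otherwise we output $I = \{1,\ldots,j\}$ for the first $j$ at which $\sigma_j \geq T$ (the second bullet).

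For correctness of the second bullet, the key observation is a uniform size bound on each $B_i$. By Lemma~\ref{lem:mu-approximate-sumset}, $B_i = A_{2i-1}\oplus_\mu A_{2i}$ is $\mu$-canonical and approximates $A_{2i-1}\oplus A_{2i}$ with factor $1-2\mu$; since $A_{2i-1},A_{2i}\subseteq[\frac1\mu,\frac{2^{h}}{\mu})$ we have $A_{2i-1}\oplus A_{2i}\subseteq[\frac1\mu,\frac{2^{h+1}}{\mu})$, so every element of $B_i$ is at most $\frac{2^{h+1}}{\mu}$, and Lemma~\ref{prop:size-of-mu-canonical} gives $|B_i|\leq\frac{h+1}{\mu}$. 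Now if $j$ is the first index with $\sigma_j\geq T$, then $\sigma_{j-1}<T$ (with $\sigma_0=0$), so
\[
T \;\leq\; \sigma_j \;=\; \sigma_{j-1}+|B_j| \;<\; T+\frac{h+1}{\mu},
\]
which is exactly the claimed bound $\frac{128cnh}{m\mu}\log\frac1\mu \leq \sum_{i\in I}|B_i| < \frac{128cnh}{m\mu}\log\frac1\mu + \frac{h+1}{\mu}$.

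For the running time, the crucial point is that Lemma~\ref{lem:time-mu-approximate-sumset} is output-sensitive: since $A_{2i-1},A_{2i}\subseteq[0,\frac{2^h}{\mu})$, computing $B_i$ costs $O(|B_i|\cdot h\log^5\frac1\mu\,\mathrm{polyloglog}\,\frac1\mu)$ time. In either outcome the algorithm only ever computes sumsets whose sizes sum to at most $T+\frac{h+1}{\mu}$, so the total time is $O\!\big((T+\frac{h+1}{\mu})\cdot h\log^5\frac1\mu\,\mathrm{polyloglog}\,\frac1\mu\big)$. Since $n\geq m$ and $\frac1\mu$ is at least a constant (so $\log\frac1\mu=\Omega(1)$), we have $\frac{h+1}{\mu}=O(T)$ and $T\cdot h = O\!\big(\frac{nh^2}{m\mu}\log\frac1\mu\big)$, which yields the stated bound $O\!\big(\frac{nh^2}{m\mu}\log^6\frac1\mu\,\mathrm{polyloglog}\,\frac1\mu\big)$. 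I do not expect a genuine obstacle here; the only two points needing care are that the per-set cost in Lemma~\ref{lem:time-mu-approximate-sumset} is charged against $|B_i|$ (so that the early abort really does control the total work), and that the overshoot $\frac{h+1}{\mu}$ from computing the last set $B_j$ is absorbed into $O(T)$ using $n\geq m$.
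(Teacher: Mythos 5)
Your proposal is correct and matches the paper's proof essentially verbatim: compute the $B_i$ left to right using the output-sensitive running time of Lemma~\ref{lem:time-mu-approximate-sumset}, stop as soon as the running total reaches the threshold $\frac{128cnh}{m\mu}\log\frac{1}{\mu}$, and bound the overshoot by the uniform bound $|B_i|\leq\frac{h+1}{\mu}$ from Lemma~\ref{prop:size-of-mu-canonical}. The only thing you spell out slightly more explicitly than the paper is absorbing the overshoot $\frac{h+1}{\mu}$ into $O(T)$ via $n\geq m$, which is harmless and correct.
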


\begin{proof}
    Since $B_i=A_{2i-1}\oplus_\mu A_{2i}$, $B_i$ is $\mu$-canonical and $B_i\subseteq [\frac{1}{\mu},\frac{2^{h+1}}{\mu})$. We have $|B_i|\leq \frac{h+1}{\mu}$. By Lemma~\ref{lem:time-mu-approximate-sumset}, we can compute $B_i$ in $O(|B_i|\cdot h\log^5 \frac{1}{\mu}\mathrm{polyloglog}\,\frac{1}{\mu})$-time. We compute from $i=1$ and stop as soon as $\sum_{i=1}^{i'}|B_i|\geq \frac{128cnh}{m\mu}\log\frac{1}{\mu}$. Then $\sum_{i=1}^{i'}|B_i|=\sum_{i=1}^{i'-1}|B_i|+|B_{i'}|<\frac{128cnh}{m\mu}\log\frac{1}{\mu}+\frac{h+1}{\mu}=O(\frac{nh}{m\mu}\log\frac{1}{\mu})$. So the running time is $O(\frac{nh^2}{m\mu}\log^6\frac{1}{\mu}\mathrm{polyloglog}\,\frac{1}{\mu})$.

    If we stop when $i=i'<\ell$, we have $\sum_{i=1}^{i'}|B_i|\geq \frac{1284cnh}{m\mu}\log\frac{1}{\mu}$ and return $I=[1,i']$. Otherwise, we compute compute $B_1,\ldots,B_{\ell}$.
\end{proof}

\begin{lemma}\label{lem:compute-a-level-recover}
    Given complete $\mu$-canonical sets $A_1,\ldots,A_{2\ell}\subseteq [\frac{1}{\mu},\frac{2^{h}}{\mu})$. If $A_1\oplus\cdots\oplus A_{2\ell}$ approximate $\mathcal{S}_X$ with factor $1-\mu'$ in $[\frac{m}{\mu},\frac{2m}{\mu}]$, then in $O(\frac{nh^2}{m\mu}\log^6\frac{1}{\mu}\mathrm{polyloglog}\,\frac{1}{\mu}+\sum_{i=1}^{2\ell}|A_i|)$-time, we can compute complete $\mu$-canonical sets $Z_1,\ldots,Z_\ell$ that $Z_1\oplus\cdots\oplus Z_\ell$ approximate $\mathcal{S}_X$ with factor $(1-\mu')(1-2\mu)$ in $[\frac{m}{\mu},\frac{2m}{\mu}]$ and $\sum_{i=1}^{\ell}|Z_i|=O(\frac{nh^2}{m\mu}\log^6\frac{1}{\mu}\mathrm{polyloglog}\,\frac{1}{\mu}+\sum_{i=1}^{2\ell}|A_i|) $. 
    
    In addition, for any $z\in Z_i$, we can recover $a\in A_{2i-1}\cup\{0\}$ and $b\in A_{2i}\cup\{0\}$ such that $s\leq a+b\leq \frac{1}{1-2\mu }s$ in $O(|A_{2i-1}|\log (|A_{2i-1}|)+|A_{2i}|\log(|A_{2i}|))$.       
\end{lemma}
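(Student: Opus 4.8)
The plan is to run Lemma~\ref{lem:compute-a-level} on $A_1,\dots,A_{2\ell}$ and split on its two possible outputs; in both cases I will produce sets $Z_1,\dots,Z_\ell$ with $Z_i\subseteq A_{2i-1}\oplus_\mu A_{2i}$, which makes the ``recover'' clause immediate (Lemma~\ref{lem:mu-approximate-sumset-recover} on the computed pairs, a binary search on the rest). The whole point is that $\oplus_\mu$ loses only a $(1-2\mu)$ factor (Lemma~\ref{lem:mu-approximate-sumset}) and preserves completeness (Observation~\ref{obs:complete-mu-canonical-sumset}), so chaining the two transitivity lemmas gives the $(1-\mu')(1-2\mu)$ bound once we know $Z_1\oplus\cdots\oplus Z_\ell$ sits inside $B_1\oplus\cdots\oplus B_\ell$ (for clause (ii) of Definition~\ref{def:approx_fact}) and contains a suitably dense sequence over $[m/\mu,2m/\mu]$ (for clause (i)).

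\emph{The sparse case.} If Lemma~\ref{lem:compute-a-level} returns all of $B_1,\dots,B_\ell$ (with $B_i=A_{2i-1}\oplus_\mu A_{2i}$), set $Z_i:=B_i$. Each $Z_i$ is complete $\mu$-canonical (Observation~\ref{obs:complete-mu-canonical-sumset}) and approximates $A_{2i-1}\oplus A_{2i}$ with factor $1-2\mu$ (Lemma~\ref{lem:mu-approximate-sumset}); iterating Lemma~\ref{lem:approx-err-2}(i) with $u=+\infty$ gives that $Z_1\oplus\cdots\oplus Z_\ell$ approximates $A_1\oplus\cdots\oplus A_{2\ell}$ with factor $1-2\mu$, and Lemma~\ref{lem:approx-err-1}(i) composes this with the hypothesis to obtain factor $(1-\mu')(1-2\mu)$ in $[m/\mu,2m/\mu]$. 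Since Lemma~\ref{lem:compute-a-level} did not stop early, $\sum_i|Z_i|<\frac{128cnh}{m\mu}\log\frac1\mu$, inside the budget, and the running time is that of Lemma~\ref{lem:compute-a-level}; recovery for $z\in Z_i$ is Lemma~\ref{lem:mu-approximate-sumset-recover} verbatim.

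\emph{The dense case.} Otherwise Lemma~\ref{lem:compute-a-level} returns a prefix $I=[1,i']$ and the computed sets $B_1,\dots,B_{i'}$ with $\sum_{i\in I}|B_i|\ge\frac{128cnh}{m\mu}\log\frac1\mu$. Put $Z_i:=B_i$ for $i\le i'$, and for $i>i'$ put $Z_i:=A_{2i-1}\cup A_{2i}\cup\{z_i^*\}$, where $z_i^*$ is the top element of the $\mu$-canonical set $\widetilde{C}_{g_1g_2}$ built by Algorithm~\ref{alg:approximate-sumset} from $(A_{2i-1},A_{2i})$, $g_1,g_2$ being the magnitude indices of $\max A_{2i-1}$ and $\max A_{2i}$; one checks that $z_i^*$ is obtainable in $O(1)$ from those two maxima, that $z_i^*\in A_{2i-1}\oplus_\mu A_{2i}$ with $z_i^*\le\max A_{2i-1}+\max A_{2i}\le\frac1{1-2\mu}z_i^*$, and that $Z_i$ is complete $\mu$-canonical ($Z_i$ is a union of complete $\mu$-canonical sets, and adding $z_i^*$ raises the top magnitude by at most one). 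Thus $Z_i\subseteq A_{2i-1}\oplus_\mu A_{2i}$ for every $i$; the extra cost over Lemma~\ref{lem:compute-a-level} is only $O(\sum_i|A_i|)$ to build and sort the $Z_i$; and recovery is Lemma~\ref{lem:mu-approximate-sumset-recover} for $i\le i'$, or a binary search returning $(z,0)$, $(0,z)$, or $(\max A_{2i-1},\max A_{2i})$ for $i>i'$. For clause (ii), $Z_1\oplus\cdots\oplus Z_\ell\subseteq B_1\oplus\cdots\oplus B_\ell$ (defining $B_i:=A_{2i-1}\oplus_\mu A_{2i}$ for \emph{all} $i$, whether or not computed), and the same chain of Lemma~\ref{lem:mu-approximate-sumset}, Lemma~\ref{lem:approx-err-2}(i) and Lemma~\ref{lem:approx-err-1}(i) as in the sparse case gives factor $(1-\mu')(1-2\mu)$. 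For clause (i), apply Lemma~\ref{lem:dense-sequence} to $Z_1,\dots,Z_\ell$: they are complete $\mu$-canonical subsets of $[\frac1\mu,\frac{2^{h+1}}{\mu})$, their total size is at least $\sum_{i\le i'}|B_i|\ge\frac{128cnh}{m\mu}\log\frac1\mu$, and $\sum_i\max(Z_i)\ge\frac{3m}{\mu}$, so $Z_1\oplus\cdots\oplus Z_\ell$ contains a sequence from $\le m/\mu$ to $\ge 2m/\mu$ with consecutive gaps $\le m$; since $m\le\mu s$ for all $s\in[m/\mu,2m/\mu]$, this sequence realizes clause (i) with factor $1-\mu$, hence a fortiori with the weaker factor $(1-\mu')(1-2\mu)$.

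The main obstacle is the inequality $\sum_i\max(Z_i)\ge\frac{3m}{\mu}$ invoked above. I would get it from the level invariant $\sum_{j=1}^{2\ell}\max(A_j)\ge\frac{3m}{\mu}$ (carried down the tree, as in the paragraph preceding Lemma~\ref{lem:dense-sequence}): since $\max(B_i)$ and $z_i^*$ are each at least $(1-2\mu)(\max A_{2i-1}+\max A_{2i})$, one has $\sum_{i=1}^{\ell}\max(Z_i)\ge(1-2\mu)\sum_{j=1}^{2\ell}\max(A_j)$, and the stray $(1-2\mu)$ is absorbed by keeping a slightly larger constant in the standing assumption $\mu^{-1}>8\log n$. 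Everything else — associativity of $\oplus$, that a union of complete $\mu$-canonical sets is complete $\mu$-canonical, and the size/time accounting — is routine.
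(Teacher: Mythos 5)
Your proof is correct and follows the paper's structure: run Lemma~\ref{lem:compute-a-level}, and in both outcomes define complete $\mu$-canonical sets $Z_i$, handling clause~(i) via Lemma~\ref{lem:dense-sequence} and clause~(ii) via the transitivity lemmas, with the $\sum_i\max(Z_i)\geq\frac{3m}{\mu}$ invariant maintained as the paper does. The one place you genuinely diverge is the dense case for $i>i'$: you set $Z_i=A_{2i-1}\cup A_{2i}\cup\{z_i^*\}$ with $z_i^*=\max(\widetilde C_{g_1g_2})$, arranging that $Z_i\subseteq A_{2i-1}\oplus_\mu A_{2i}$ holds as an exact inclusion, so that clause~(ii) falls out of the chain $\text{Lemma~\ref{lem:mu-approximate-sumset}}\to\text{Lemma~\ref{lem:approx-err-2}(i)}\to\text{Lemma~\ref{lem:approx-err-1}(i)}$ applied to $B_1\oplus\cdots\oplus B_\ell\supseteq Z_1\oplus\cdots\oplus Z_\ell$. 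The paper instead sets $Z_i$ to the $\mu$-canonical rounding of $A_{2i-1}\cup\{\max A_{2i-1}+\max A_{2i}\}$, which is not literally a subset of $B_i$, and proves clause~(ii) directly by unpacking $\tilde s=\sum_i s_i$ and producing pairs $(a_i,b_i)$ with $(1-2\mu)(a_i+b_i)\le s_i\le a_i+b_i$ from the recovery routine. Your variant buys a slightly cleaner clause-(ii) argument at the negligible cost of storing $A_{2i}$ inside $Z_i$ as well as $A_{2i-1}$; both stay within the stated size and time budgets, and the recovery and completeness verifications you sketch are sound.
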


\begin{proof}
    We first check the total size of $B_i=A_{2i-1}\oplus_\mu A_{2i}$ of $i\in [1,\ell]$ by Lemma~\ref{lem:compute-a-level}. If we compute $B_1,\ldots,B_\ell$, we just let $Z_i=B_i$ for $i\in[1,\ell]$. Since $Z_1\oplus\cdots\oplus Z_\ell$ approximate $A_1\oplus\cdots\oplus A_{2\ell}$ with factor $1-2\mu$, by Lemma~\ref{lem:approx-err-1}, we have $Z_1\oplus\cdots\oplus Z_\ell$ approximate $\mathcal{S}_X$ with factor $(1-\mu')(1-2\mu)$ in $[\frac{m}{\mu},\frac{2m}{\mu}]$. By Observation~\ref{obs:complete-mu-canonical-sumset}, $Z_1,\ldots,Z_\ell$ are complete $\mu$-canonical.

    If we return a subset $I$ of $[1,\ell]$ such that $\frac{128cnh}{m\mu}\log\frac{1}{\mu}\leq \sum_{i\in I}|B_i|< \frac{128cnh}{m\mu}\log\frac{1}{\mu}+\frac{h+1}{\mu}$. We compute $Z_i$ for $i\in[1,\ell]$ as follows. For $i\in I$, we compute $Z_i=A_{2i-1}\oplus_\mu A_{2i}$ by Algorthm~\ref{alg:approximate-sumset}. For $i\notin I$, we let $Z_i=A_{2i-1}\cup \{\max(A_{2i-1})+\max(A_{2i})\}$ and round it to be $\mu$-canonical, which can be done in $O(|A_{2i-1}|)$ time. It is easy to see that it is complete $\mu$-canonical. The total running time is $O(\frac{nh^2}{m\mu}\log^6\frac{1}{\mu}\mathrm{polyloglog}\,\frac{1}{\mu}+\sum_{i=1}^{2\ell}|A_i|)$and we have 
    \[
    \sum_{i=1}^{\ell}|Z_i|=\sum_{i\in I}|B_i|+\sum_{i\notin I} (A_{2i-1}+1)=O(\frac{nh^2}{m\mu}\log^6\frac{1}{\mu}\mathrm{polyloglog}\,\frac{1}{\mu}+\sum_{i=1}^{2\ell}|A_i|).
    \] 
    For any $z\in Z_i$, if $i\in I$, we can recover $a$ and $b$ by Lemma~\ref{lem:mu-approximate-sumset-recover}. If $i\notin I$, we can just check $z\in A_{2i-1}$ or $z=2^h\cdot\lfloor(\max(A_{2i-1})+\max(A_{2i}))/2^h\rfloor$. 
    
    Now we prove that $Z_1\oplus\cdots\oplus Z_\ell$ approximate $\mathcal{S}_X$ with factor $(1-\mu')(1-2\mu)$ in $[\frac{m}{\mu},\frac{2m}{\mu}]$. To show Definition~\ref{def:approx_fact}(ii), for any $\tilde{s}\in Z_1\oplus\cdots\oplus Z_\ell$, suppose $\tilde{s}=s_1+\cdots+s_\ell$ where $s_i\in Z_i\cup\{0\}$. For any $s_i$, there exist $a_i\in A_{2i-1}\cup\{0\}$ and $b_i\in A_{2i}\cup\{0\}$ such that $(1-2\mu)(a_i+b_i)\leq s_i\leq a_i+b_i$. Let $s=a_1+b_1+\cdots+a_{\ell}+b_\ell$. We have $s\in A_1\oplus\cdots\oplus A_{2\ell}$ and $\tilde{s}\leq s\leq \frac{1}{1-2\mu}\tilde{s}$. Since $A_1\oplus\cdots\oplus A_{2\ell}$ approximate $\mathcal{S}_X$ with factor $1-\mu'$ in $[\frac{m}{\mu},\frac{2m}{\mu}]$, there is a $s'\in\mathcal{S}_X$ such that $s\leq s'\leq \frac{1}{1-\mu'}s$. Then we have $\tilde{s}\leq s'\leq \frac{1}{(1-\mu')(1-2\mu)}\tilde{s}$.
    
    Now we show Definition~\ref{def:approx_fact}(i). Since $\sum_{i=1}^\ell\max(Z_i)\geq (1-\mu)\sum_{i=1}^{2\ell}\max(A_i)$, we still have $\sum_{i=1}^\ell Z_i\geq \frac{3m}{\mu}$. So by Lemma~\ref{lem:dense-sequence}, $Z_1\oplus\ldots\oplus Z_\ell$ has a sequence $z_1<\cdots<z_L$ such that $z_1\leq \frac{m}{\mu}$, $z_L\geq \frac{2m}{\mu}$ and $z_i-z_{i-1}\leq m$ for $i\in[2,L]$. 
    For any $s\in \mathcal{S}_X[\frac{m}{\mu},\frac{2m}{\mu}]$, there exist some $z_i$ such that $z_i\leq s< z_{i+1}\leq z_i+m$. So $s-m\leq z_i\leq s$. Since $s\geq \frac{m}{\mu}$, we have $(1-\mu)s\leq z_i\leq s$.
\end{proof}


\begin{lemma}
    We can compute a set $\widetilde{S}$ that approximates $\mathcal{S}_X$ with additive error $O(m\log n)$ in $[\frac{m}{\mu},\frac{2m}{\mu}]$ in $O(n\log n+\frac{n}{m\mu}\log^3 n\log^9\frac{1}{\mu}\mathrm{polyloglog}\frac{1}{\mu})$ time. For any $\tilde{s}\in \widetilde{S}$, we can recover a subset $Y\subseteq X$ such that $\tilde{s}-O(m\log n)\leq \Sigma(Y) \leq \tilde{s}+O(m\log n)$.
\end{lemma}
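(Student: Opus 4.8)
The plan is to instantiate the binary summation tree from the start of Section~\ref{sec:alg} and strip it level by level with Lemma~\ref{lem:compute-a-level-recover}. The leaves (level $0$) are the singletons $A_i=\{x_i\}$; since $X\subseteq[\tfrac1\mu,\tfrac2\mu)$ each is trivially complete $\mu$-canonical and $A_1\oplus\cdots\oplus A_n=\mathcal{S}_X$, which approximates $\mathcal{S}_X$ with factor $1$ in $[\tfrac m\mu,\tfrac{2m}\mu]$. First I would dispose of the two degenerate regimes flagged in the text: if $\tfrac1\mu\le 8\log n$ then $\Sigma(X)=O(n\log n)$, so $\mathcal{S}_X$ can be built exactly by $\lceil\log n\rceil$ rounds of Sparse FFT (Lemma~\ref{lem:sparse-fft}) in $\widetilde{O}(n)$ time and a witness for any $\tilde s$ recovered by the usual FFT-based self-reduction; and the $O(\log\log n)$ topmost levels where $\ell<24h$ are handled by computing every $B_i=A_{2i-1}\oplus_\mu A_{2i}$ outright via Lemma~\ref{lem:time-mu-approximate-sumset}, each level costing $O(\tfrac1\mu\log^3 n\log^5\tfrac1\mu\,\mathrm{polyloglog}\,\tfrac1\mu)$, which is absorbed into the target bound using $\tfrac1\mu\le\tfrac{n}{m\mu}$ (as $m\le n$) and $\log\log n<\log\tfrac1\mu$ (from $\tfrac1\mu>8\log n$).

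For the remaining levels I would maintain the invariant that after level $h$ the $\ell_h=\lceil n/2^h\rceil$ sets $Z^{(h)}_1,\dots,Z^{(h)}_{\ell_h}$ are complete $\mu$-canonical and $Z^{(h)}_1\oplus\cdots\oplus Z^{(h)}_{\ell_h}$ approximates $\mathcal{S}_X$ with factor $(1-2\mu)^h$ in $[\tfrac m\mu,\tfrac{2m}\mu]$. One level is processed by a single call to Lemma~\ref{lem:compute-a-level-recover}, which turns a factor-$(1-2\mu)^{h-1}$ approximation into a factor-$(1-2\mu)^h$ one (it already folds in the transitivity of Lemma~\ref{lem:approx-err-1}) while preserving completeness by Observation~\ref{obs:complete-mu-canonical-sumset}. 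After $L:=\lceil\log n\rceil$ levels one complete $\mu$-canonical set $Z$ remains that approximates $\mathcal{S}_X$ with factor $(1-2\mu)^L$ in $[\tfrac m\mu,\tfrac{2m}\mu]$. Since $\tfrac1\mu>8\log n$, Bernoulli's inequality gives $(1-2\mu)^L\ge 1-4\mu\log n\ge\tfrac12$, so I would output $\widetilde S:=Z[0,\tfrac{2m}\mu]$ and invoke Lemma~\ref{prop:factor-to-additive-error} with $\mu'=4\mu\log n\le\tfrac12$ and $v=\tfrac{2m}\mu$, which certifies that $\widetilde S$ approximates $\mathcal{S}_X$ with additive error $2\mu'v\le 16m\log n=O(m\log n)$ in $[\tfrac m\mu,\tfrac{2m}\mu]$.

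The running time is then a short recurrence. Writing $S_h=\sum_i|Z^{(h)}_i|$, Lemma~\ref{lem:compute-a-level-recover} shows that level $h$ only appends the newly formed sumset entries, of total size $O(\tfrac{nh}{m\mu}\log\tfrac1\mu)$, on top of the level-$(h-1)$ sets, plus $O(\ell_h)$ bookkeeping elements, so $S_h=O(\tfrac{nh}{m\mu}\log\tfrac1\mu+S_{h-1}+\ell_h)$ with $S_0=n$, which unrolls to $S_h=O(\tfrac{nh^2}{m\mu}\log\tfrac1\mu+n)$; and the time spent at level $h$, which by the same lemma is $O(\tfrac{nh^2}{m\mu}\log^6\tfrac1\mu\,\mathrm{polyloglog}\,\tfrac1\mu+S_{h-1})$, is therefore $O(\tfrac{nh^2}{m\mu}\log^6\tfrac1\mu\,\mathrm{polyloglog}\,\tfrac1\mu+n)$. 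Summing over $h\le L$ (so $\sum_h h^2=\Theta(\log^3 n)$) gives the running time claimed in the statement. For recovery, given $\tilde s\in\widetilde S\subseteq Z$ I would walk the tree top-down: at each node apply the ``in addition'' clause of Lemma~\ref{lem:compute-a-level-recover} (ultimately Lemma~\ref{lem:mu-approximate-sumset-recover}) to split the current value into contributions of its two children at the cost of a factor $1-2\mu$ per level; after $L$ splits the leaf values lie in $\{0,x_i\}$, so $Y:=\{x_i:\text{leaf }i\text{ contributed }x_i\}$ satisfies $\tilde s\le\Sigma(Y)\le(1-2\mu)^{-L}\tilde s\le(1-4\mu\log n)^{-1}\tilde s\le\tilde s+8\mu\log n\cdot\tfrac{2m}\mu=\tilde s+O(m\log n)$, using $\tilde s\le\tfrac{2m}\mu$; its total cost, $O(\log\tfrac1\mu\cdot\sum_h S_{h-1})$, is again within the stated bound.

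The main obstacle is entirely the bookkeeping, as the substantive work (the long-sequence argument of Lemma~\ref{lem:dense-sequence}, the ``compute-or-prune a level'' routine of Lemma~\ref{lem:compute-a-level-recover}, and the recovery primitives) is already available. Two points require care. First, the multiplicative factor must be shown to degrade only to $(1-2\mu)^{\Theta(\log n)}$; combined with the hypothesis $\tfrac1\mu>8\log n$ this is exactly what keeps the induced additive error at $O(m\log n)$ instead of something larger, and keeps $\mu'\le\tfrac12$ so that Lemma~\ref{prop:factor-to-additive-error} applies. Second, the size/time recurrence from Lemma~\ref{lem:compute-a-level-recover} must be verified to telescope; the ``$+\sum_i|A_i|$'' carry-over term cannot be allowed to compound, which it does not, because a level only adds the $O(\tfrac{nh}{m\mu}\log\tfrac1\mu)$ fresh sumset entries to the sets inherited from the previous level (and a $\mu$-canonical subset of $[\tfrac1\mu,\tfrac{2^h}\mu)$ has size at most $h/\mu$ in any case).
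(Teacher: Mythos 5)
Your proof is correct and takes essentially the same route as the paper: build the binary summation tree from singleton leaves, process it level by level via Lemma~\ref{lem:compute-a-level-recover} (so the factor degrades as $(1-2\mu)^{\lceil\log n\rceil}$), convert the factor to an $O(m\log n)$ additive error on $[\tfrac m\mu,\tfrac{2m}\mu]$ with Lemma~\ref{prop:factor-to-additive-error}, and recover a witness by walking the tree top-down with Lemma~\ref{lem:mu-approximate-sumset-recover}. You spell out a few things the paper leaves implicit or states only in the prose of Section~\ref{sec:alg} (the $\frac1\mu\le 8\log n$ fallback, the $\ell<24h$ top-of-tree case, the Bernoulli estimate keeping $\mu'\le\tfrac12$, and the explicit size/time recurrence); these are faithful to the paper's intent, and your bookkeeping in fact yields $\log^6\tfrac1\mu$ rather than the $\log^9\tfrac1\mu$ the statement writes, which is a harmless over-counting in the paper and irrelevant inside the $\widetilde O(\cdot)$.
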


\begin{proof}
    Let $A_i=\{x_i\}$ for $i\in[1,n]$. Then $\mathcal{S}_X=A_1\oplus\cdots\oplus A_{n}$.
    We keep computing new levels by Lemma~\ref{lem:compute-a-level-recover} until we reach the root $\widetilde{S}$ and we return $\widetilde{S}[0,\frac{2m}{\mu}]$. In the first level, $\sum_{i=1}^n|A_i|=n$. Then in level $h$, the total size is $O(\frac{nh^2}{m\mu}\log\frac{1}{\mu}+n)$. Since there are $\lceil\log n\rceil$ levels, the total running time is $O(n\log n + \frac{n}{m\mu}\log^3 n\log^9\frac{1}{\mu}\mathrm{polyloglog}\frac{1}{\mu})$

    By Lemma~\ref{lem:compute-a-level-recover}, $\widetilde{S}$ approximate $\mathcal{S}_X$ with factor $(1-2\mu)^{\lceil\log n\rceil}=1-O(\mu\log n)$ in $[\frac{m}{\mu},\frac{2m}{\mu}]$. By Lemma~\ref{prop:factor-to-additive-error}, $\widetilde{S}[0,\frac{2m}{\mu}]$ approximate $\mathcal{S}_X$ with additive error $O(m\log n)$ in $[\frac{m}{\mu},\frac{2m}{\mu}]$.

    In each level, the total size is $O(\frac{nh^2}{m\mu}\log\frac{1}{\mu}+n)$, and the size of each set is $O(\frac{h}{\mu})=O(\frac{\log n}{\mu})$. By Lemma~\ref{lem:compute-a-level-recover}, for any $\tilde{s}\in\widetilde{S}[0,\frac{2m}{\mu}]$, we can recover $a_1,\ldots,a_n$ such that $a_i\in\{x_i,0\}$ for all $i\in[1,n]$ and $(1-2\mu)^{\lceil\log n\rceil}\sum_{i=1}^n a_i\leq \tilde{s}\leq \sum_{i=1}^n a_i$. Let $Y\subseteq X$ be the set that $a_i\neq 0$. We have $\tilde{s}\leq \Sigma (Y)\leq \tilde{s}+O(m\log n)$.
\end{proof}

\bibliographystyle{alphaurl}
\bibliography{LJY}
\end{document}